\documentclass[11pt,letterpaper,reqno]{amsart}
\usepackage[margin=1in]{geometry}
\usepackage{amsmath,amsthm,amssymb,graphicx,paralist}
\usepackage[authoryear,sort&compress]{natbib}
\bibpunct{[}{]}{,}{n}{,}{,}
\usepackage{bm}
\usepackage{tikz}
\usepackage{tikz-cd}
\usepackage{comment}
\usepackage[colorlinks=false,bookmarksdepth=0]{hyperref}
\usepackage[all]{xy}
\entrymodifiers={+!!<0pt,\fontdimen22\textfont2>}
\theoremstyle{plain}
\newtheorem{theorem}{Theorem}[section]
\newtheorem{proposition}[theorem]{Proposition}
\newtheorem{corollary}[theorem]{Corollary}

\theoremstyle{definition}

\theoremstyle{remark}
\newtheorem{remark}[theorem]{Remark}

\numberwithin{equation}{section}

\def\rank{\mathop{\mathrm{rank}}\nolimits}

\def\Cr{\mathop{\mathrm{Cr}}\nolimits}

\newcommand{\Q}{\mathcal{Q}}
\newcommand{\Rr}{\mathcal{R}}
\newcommand{\N}{\mathcal{N}}
\newcommand{\M}{\mathcal{M}}

\newcommand{\Pp}{\mathcal{P}}

\newcommand{\bq}{\mathbf{q}}
\newcommand{\bp}{\mathbf{p}}

\begin{document}

\title[Discrete HJ]{A Morse-Family Integrator for Hamilton--Jacobi Dynamics Across Caustics}




\author{
F. Jim\'enez Alburquerque
\and
M. Leok
\and
C. Sard\'on
\and
X. Zhao
}
\begin{abstract}
We develop a geometric framework for implicit discrete Hamiltonian systems
based on discrete Morse families, Lagrangian relations, and discrete
analogues of Tulczyjew's triple. The main idea is to regard the
Lagrangian submanifold defining the discrete dynamics, rather than an
explicit symplectic evolution map, as the fundamental geometric object.
This viewpoint naturally accommodates implicit, constrained, and
degenerate discrete systems.

Within this framework, we formulate a Type--II discrete
Hamilton--Jacobi theory in terms of the propagation of Lagrangian
submanifolds between consecutive discrete steps. When these
submanifolds are locally represented by exact one-forms
$dW_k$ and $dW_{k+1}$, the resulting equations provide a discrete
Hamilton--Jacobi relation between consecutive generating functions.
More generally, when the projection onto configuration space becomes
singular and a single-valued generating function ceases to exist, we
show that the evolution can be described by the composition of
Type--II discrete dynamics with Morse families. This yields a
generating family for the propagated Lagrangian submanifold without
requiring the dynamics to be represented as a graph.

As an application, we consider the propagation of optical wavefronts
through fold caustics. A Type--II discrete Hamiltonian yields a
symplectic ray integrator, while a Morse family represents the
multivalued wavefront near the caustic. Their composition provides a
discrete propagation rule for the complete Lagrangian manifold across
the singularity. In this way, the same geometric construction
simultaneously provides a discrete Hamiltonian integrator and a
regular representation of multivalued Hamilton--Jacobi solutions.
\end{abstract}

\keywords{Hamilton--Jacobi theory, Morse families, discrete Hamiltonian
mechanics, geometric numerical integration, Lagrangian submanifolds,
caustics, wavefront propagation}
\maketitle


\section{Introduction}

Hamilton--Jacobi theory provides one of the fundamental links between
Hamiltonian dynamics, generating functions, and the geometry of
Lagrangian submanifolds. In its classical geometric formulation, a
solution of the Hamilton--Jacobi equation determines a Lagrangian
submanifold of phase space through the image of a closed, and locally
exact, one-form. This interpretation places Hamilton--Jacobi theory
naturally within symplectic geometry and makes generating functions a
fundamental tool for the geometric description of Hamiltonian
dynamics.

Variational principles provide a closely related geometric description
of mechanical systems. In the continuous setting, Lagrangian and
Hamiltonian dynamics can be formulated in terms of symplectic
structures, Lagrangian submanifolds, and generating families.
Tulczyjew's triple gives a particularly useful framework connecting
these descriptions and allows regular, singular, constrained, and
implicit systems to be treated geometrically.

In parallel, discrete variational mechanics has become an important
tool for constructing geometric numerical integrators. Starting from
a discrete variational principle, one obtains discrete Euler--Lagrange
equations whose associated evolution maps preserve a discrete
symplectic structure and reproduce many of the qualitative properties
of the underlying continuous dynamics. Since the work of Marsden and
West \cite{Marsden_West_2001}, variational integrators and their
geometric properties have been extensively developed. Discrete
Hamiltonian mechanics provides the corresponding Hamiltonian
description, in which Type--I and Type--II generating functions give
rise to discrete Hamilton equations and symplectic evolution maps
\cite{LeokZhang2011},\cite{leok2011variational}.

The description in terms of ordinary generating functions, however,
depends on the regularity of a suitable projection of the underlying
Lagrangian submanifold. This assumption becomes restrictive for
implicit, constrained, and degenerate systems, where the dynamics is
more naturally described by a Lagrangian relation than by the graph of
a symplectic transformation. A related difficulty appears when the
projection of a regular Lagrangian submanifold onto configuration
space develops singularities. In this case the Lagrangian submanifold
itself may remain regular even though its representation by a
single-valued generating function breaks down.

Morse families provide a natural geometric mechanism for overcoming
this difficulty. They generalize ordinary generating functions by
introducing auxiliary variables and recovering the associated
Lagrangian submanifold through criticality with respect to those
variables. In particular, the Maslov--H\"ormander theorem ensures that
Lagrangian submanifolds admit local representations by Morse families
\cite{BeCaSi09,Be11,Ca91,Ca15}. This viewpoint has proved especially
useful in the geometric treatment of implicit Hamiltonian systems and
singular Legendre transformations, where the Lagrangian submanifold,
rather than a particular choice of generating function, is regarded as
the fundamental geometric object \cite{EsenLeonSardon2018}.

The purpose of the present work is to develop this viewpoint in the
discrete setting. We introduce discrete generating families as
discrete analogues of Morse families and combine them with discrete
versions of special symplectic structures and Tulczyjew's triple. The
resulting construction describes discrete dynamics directly in terms
of Lagrangian submanifolds and Lagrangian relations, without requiring
the existence of a globally defined discrete evolution map. In this
way, regular, implicit, constrained, and degenerate discrete systems
can be incorporated into the same geometric framework.

A central role is played by Type--II discrete Morse families. A family
\[
F_k(\bq_k,\bp_{k+1},\lambda)
\]
generates a Lagrangian relation through the equations
\[
\bp_k
=
\frac{\partial F_k}{\partial\bq_k},
\qquad
\bq_{k+1}
=
\frac{\partial F_k}{\partial\bp_{k+1}},
\qquad
\frac{\partial F_k}{\partial\lambda}
=
0.
\]
The mixed variables $(\bq_k,\bp_{k+1})$ retain the natural Type--II
structure of discrete Hamiltonian mechanics, while the auxiliary
variables $\lambda$ encode the critical set required to describe
implicit dynamics. Ordinary Type--II discrete Hamiltonians are
recovered when no auxiliary variables are necessary.

This construction leads naturally to a discrete Hamilton--Jacobi
theory formulated directly in terms of the propagation of Lagrangian
submanifolds. Let
\[
E_k\subset T^*\Q\times T^*\Q
\]
denote the Lagrangian relation defining one step of the discrete
dynamics. If the Lagrangian submanifolds at two consecutive steps can
be represented as
\[
\Lambda_k=\operatorname{Im}\gamma_k,
\qquad
\Lambda_{k+1}=\operatorname{Im}\gamma_{k+1},
\]
for closed one-forms $\gamma_k$ and $\gamma_{k+1}$, the discrete
Hamilton--Jacobi condition is expressed geometrically as
\[
E_k\circ\Lambda_k
\subset
\Lambda_{k+1}.
\]
Thus, rather than identifying the incoming and outgoing momenta
through a single section, the formulation naturally distinguishes the
Lagrangian data at consecutive discrete steps.

When the one-forms are exact,
\[
\gamma_k=dW_k,
\qquad
\gamma_{k+1}=dW_{k+1},
\]
the preceding geometric condition gives a discrete
Hamilton--Jacobi relation between the consecutive generating
functions $W_k$ and $W_{k+1}$. This provides the usual generating
function description whenever the relevant Lagrangian submanifolds
are graphical.

The graphical description, however, is only local and may fail when
the projection onto configuration space becomes singular. To treat
this situation without selecting individual branches, we replace the
single generating function by a Morse family
\[
S_k(\bq_k,a)
\]
and propagate the corresponding Lagrangian submanifold directly.
Combining this family with the Type--II Morse family defining the
discrete dynamics leads to the composition family
\[
\mathcal S_{k+1}
(\bq_{k+1};
\bq_k,\bp_{k+1},a,\lambda)
=
S_k(\bq_k,a)
+
\langle\bp_{k+1},\bq_{k+1}\rangle
-
F_k(\bq_k,\bp_{k+1},\lambda).
\]
Criticality with respect to the internal variables recovers
simultaneously the Type--II discrete Hamilton equations, the
criticality equations of the initial Morse family, and the auxiliary
constraints contained in $F_k$. The resulting family therefore
generates the complete propagated Lagrangian submanifold without
requiring its projection onto configuration space to remain regular.

This observation is particularly relevant in the presence of
caustics. In Hamiltonian descriptions of geometrical optics and
semiclassical wave propagation, wavefronts are naturally associated
with Lagrangian manifolds generated by the Hamiltonian characteristic
flow \cite{MaslovFedoriuk1981,ArnoldCaustics1990}. At a caustic, the
projection of the Lagrangian manifold onto configuration space becomes
singular and the corresponding Hamilton--Jacobi description becomes
multivalued, although the underlying Lagrangian manifold may remain
regular \cite{ArnoldCaustics1990}. Morse families are therefore a
natural representation of the wavefront near such singularities,
since different branches are encoded by different critical points of
the same generating family
\cite{ArnoldGuseinVarchenko1985,ArnoldCaustics1990}.

We exploit this geometric feature to construct a discrete description
of wavefront propagation through fold caustics. Starting from an
optical Hamiltonian, we introduce a Type--II discrete Hamiltonian whose
discrete Hamilton equations define a symplectic ray integrator. The
wavefront is represented locally by a fold Morse family, and its
composition with the discrete Hamiltonian relation generates the
wavefront at the next discrete step. The corresponding criticality
conditions reproduce the discrete ray equations while retaining all
branches of the wavefront. Consequently, the discrete Hamiltonian
evolution remains regular through the formation of the caustic; the
singularity occurs in the projection of the propagated Lagrangian
submanifold onto configuration space rather than in the discrete
Hamiltonian dynamics itself.

\medskip

\noindent
\textbf{Standing hypothesis.} The mixed variables $(\bq_k,\bp_{k+1})$, the pairing
$\langle\bp_{k+1},\bq_{k+1}\rangle$ and the product $\Q\times\Q^*$ used throughout are
not canonical on a general manifold. We therefore assume, whenever Type--II objects
appear, that $\Q$ is a finite-dimensional real vector space, or an affine space with a
chosen origin, so that $T^*\Q\simeq\Q\times\Q^*$ globally. For a general configuration
manifold every such formula is to be read in a fixed local cotangent chart, and the
resulting statements are local.

\medskip

The main contributions of this work can be summarized as follows:
\begin{itemize}

\item We develop a discrete geometric framework based on Morse
families, Lagrangian relations, special symplectic structures, and
discrete analogues of Tulczyjew's triple.

\item We formulate implicit Type--II discrete Hamiltonian dynamics
directly in terms of Lagrangian relations generated by discrete Morse
families.

\item We formulate a discrete Hamilton--Jacobi theory as the
propagation of Lagrangian submanifolds between consecutive discrete
steps.

\item We extend this propagation principle from Lagrangian graphs to
general Lagrangian submanifolds represented by Morse families through
a composition of generating families.

\item We apply the construction to optical fold caustics, obtaining a
discrete propagation scheme that combines a symplectic ray integrator
with a generating-family representation of the multivalued wavefront.

\end{itemize}

The paper is organized as follows. Section~2 reviews the continuous
geometric structures used throughout the paper, including Lagrangian
submanifolds, Morse families, special symplectic structures,
Tulczyjew's triple, and Dirac systems. Section~3 develops their
discrete counterparts and introduces discrete generating families and
the discrete Tulczyjew triple. Section~4 formulates discrete Dirac
mechanics and discusses constrained and degenerate discrete systems.
Section~5 develops the Type--II discrete Hamilton--Jacobi theory and
the propagation of Lagrangian submanifolds represented by Morse
families. Finally, Section~6 applies the construction to the discrete
propagation of optical wavefronts through fold caustics.
\section{Continuous preliminaries}

\subsection{Symplectic geometry}
\noindent
Let $(\M,\omega)$ be a symplectic manifold and let $\N\subset\M$ be a submanifold.
The \emph{symplectic orthogonal complement} of $T\N$ is defined by
\[
T\N^\bot
:=
\left\{
u\in T\M
\;\middle|\;
\omega(u,v)=0
\text{ for all }
v\in T\N
\right\}.
\]

The symplectic orthogonal complement provides a natural way of comparing the tangent space of a submanifold with the ambient symplectic structure.
Several important classes of submanifolds are characterized in terms of the relation between $T\N$ and $T\N^\bot$.

\medskip

The submanifold $\N$ is called \emph{isotropic} if
\[
T\N\subset T\N^\bot.
\]
In this case,
\[
\dim\N\le \frac12\dim\M.
\]

The submanifold $\N$ is called \emph{coisotropic} if
\[
T\N^\bot\subset T\N.
\]
In this case,
\[
\dim\N\ge \frac12\dim\M.
\]

The submanifold $\N$ is called \emph{Lagrangian} if
\[
T\N=T\N^\bot.
\]
Equivalently, $\N$ is a maximal isotropic submanifold of $(\M,\omega)$.
In this case,
\[
\dim\N=\frac12\dim\M.
\]

Finally, $\N$ is called \emph{symplectic} if
\[
T\N\cap T\N^\bot=\{0\}.
\]
Equivalently, the restriction
\[
\omega_\N:=\iota^*\omega
\]
is non-degenerate and therefore defines a symplectic structure on $\N$. In particular
$\dim\N$ is then even, each tangent space $T_n\N$ being a symplectic vector space.

\medskip

These classes of submanifolds are preserved under symplectomorphisms.
In particular, the image of a Lagrangian (respectively isotropic, coisotropic, or symplectic) submanifold under a symplectomorphism is again Lagrangian (respectively isotropic, coisotropic, or symplectic).

\medskip

Two fundamental examples of Lagrangian submanifolds are:

\begin{itemize}
\item the fibers of the cotangent bundle projection
\[
\pi_\Q:T^*\Q\rightarrow \Q,
\]

\item the images of closed one-forms
\[
\gamma:\Q\rightarrow T^*\Q.
\]
\end{itemize}

The latter class includes, in particular, the zero section of the cotangent bundle.
More generally, by Weinstein's Lagrangian neighborhood theorem \cite{We77}, every Lagrangian submanifold admits a neighborhood symplectomorphic to a neighborhood of the zero section of a cotangent bundle.

\subsection{Morse families and special symplectic structures}
\label{subsec:morse-families-continuous}
Let
\[
(\Pp,\pi,\N)
\]
be a smooth fiber bundle.
A function
\[
F:\Pp\rightarrow\mathbb R
\]
may be regarded as a family of functions parametrized by the fibers of $\pi$.

The \emph{critical set} of $F$ relative to the fibration $\pi$ is defined by
\[
\Cr(F,\pi)
:=
\left\{
z\in\Pp
\;\middle|\;
\langle dF(z),V\rangle=0,
\quad
\forall V\in V_z\Pp
\right\},
\]
where
\[
V\Pp:=\ker(T\pi)
\]
denotes the vertical bundle of the fibration.

\medskip

For each
\[
z\in\Cr(F,\pi),
\]
define the bilinear map
\begin{align}
W(F,z):
V_z\Pp\times T_z\Pp
&\longrightarrow
\mathbb R,
\nonumber\\
(v,w)
&\longmapsto
D^{(1,1)}(F\circ\chi)(0,0),
\label{W}
\end{align}
where $D^{(1,1)}$ denotes the mixed second partial derivative,
\[
D^{(1,1)}(F\circ\chi)(0,0)
:=
\frac{\partial^2}{\partial s\,\partial t}(F\circ\chi)(0,0),
\]
the dependence of the right-hand side of \eqref{W} on the pair $(v,w)$ being carried
entirely by the choice of $\chi$, and where
\[
\chi:\mathbb R^2\rightarrow\Pp
\]
is any smooth map satisfying the tangent conditions at the origin,
$\chi(0,0)=z$,
\[
\left.\frac{\partial\chi}{\partial s}\right|_{(0,0)}=v,
\qquad
\left.\frac{\partial\chi}{\partial t}\right|_{(0,0)}=w,
\]
and, in addition,
\begin{equation}\label{eq:chi-vertical}
\pi\bigl(\chi(s,t)\bigr)=\pi\bigl(\chi(0,t)\bigr)
\end{equation}
near the origin, so that each curve $s\mapsto\chi(s,t)$ lies in a single fibre.

Condition \eqref{eq:chi-vertical} cannot be dispensed with. Expanding,
\[
D^{(1,1)}(F\circ\chi)(0,0)
=
D^2F(z)(v,w)
+
\bigl\langle dF(z),\partial_{st}\chi(0,0)\bigr\rangle ,
\]
and the second term is not determined by $v$ and $w$ alone: for
$\Pp=\mathbb R^2$ with $\pi(q,\lambda)=q$ and $F(q,\lambda)=q$, the maps
$\chi_1(s,t)=(t,s)$ and $\chi_2(s,t)=(t+st,s)$ both satisfy the first-derivative
conditions at $z=(0,0)$ with $v=\partial_\lambda$ and $w=\partial_q$, yet give
$D^{(1,1)}(F\circ\chi_1)=0$ and $D^{(1,1)}(F\circ\chi_2)=1$. Note that $z$ is a
critical point here, so criticality alone does not suffice. Under
\eqref{eq:chi-vertical} the vector $\partial_{st}\chi(0,0)$ is vertical and is
therefore annihilated by $dF(z)$ at a critical point, and $W(F,z)$ is well defined.
Equivalently and intrinsically, $W(F,z)$ is the normal derivative at $z$ of the
section $d_\pi F=dF|_{V\Pp}$ of $V^*\Pp$; in bundle coordinates $(q^i,\lambda^a)$,
\begin{equation}\label{eq:W-coord}
W(F,z)(v,w)
=
v^a\left(
\frac{\partial^2F}{\partial\lambda^a\partial q^i}\,w_q^i
+
\frac{\partial^2F}{\partial\lambda^a\partial\lambda^b}\,w_\lambda^b
\right).
\end{equation}

The family $F$ is said to be \emph{regular} if $\Cr(F,\pi)$ is a submanifold of $\Pp$
and
\[
\rank W(F,z)=\operatorname{codim}_\Pp\Cr(F,\pi)
\qquad\text{for every }z\in\Cr(F,\pi).
\]
Constancy of $\rank W$ along $\Cr(F,\pi)$ alone does not suffice for this, and does not
by itself make $\Cr(F,\pi)$ a submanifold of the corresponding codimension: for
$F(q,\lambda)=\lambda^4$ on $\Pp=\mathbb R^2$ with $\pi(q,\lambda)=q$, the critical set
$\{\lambda=0\}$ has codimension one while $W$ vanishes identically on it. One could instead impose a constant-rank hypothesis on the coordinate matrix
\eqref{MorseCon} in a neighbourhood of $\Cr(F,\pi)$, strong enough to invoke the
constant-rank theorem; note that this is a statement about a chart, since off the
critical set the rank of that matrix is not invariant under a change of bundle chart.

The family $F$ is called a \emph{Morse family} (or \emph{energy function}) if the rank of $W(F,z)$ is maximal for every
\[
z\in\Cr(F,\pi),
\]
that is, if $0$ is a regular value of the map $z\mapsto\partial F/\partial\lambda^a(z)$.
A Morse family is in particular regular, and since the maximal rank equals the fibre dimension of $\pi$, in that case
\[
\dim\Cr(F,\pi)=\dim\N .
\]

In local coordinates
\[
(q^i,\lambda^a)
\]
on $\Pp$, the critical set is determined by the equations
\[
\frac{\partial F}{\partial\lambda^a}=0.
\]

The family $F$ is a Morse family if and only if the Jacobian matrix of these equations with respect to the variables
\[
(q^i,\lambda^a)
\]
has maximal rank. Equivalently, the matrix
\begin{equation}\label{MorseCon}
\left(
\frac{\partial^2F}
{\partial\lambda^a\partial q^i}
\quad
\frac{\partial^2F}
{\partial\lambda^a\partial\lambda^b}
\right)
\end{equation}
has maximal rank.

\medskip

A Morse family $F$ on $(\Pp,\pi,\N)$ generates a Lagrangian submanifold
\begin{equation}\label{LagSub}
S_{T^*\N}
=
\left\{
w\in T^*\N
\;\middle|\;
(T_z^*\pi)(w)=dF(z)
\text{ for some }
z\in\Pp
\right\},
\end{equation}
with the compatibility condition
\[
\pi(z)=\pi_{T^*\N}(w).
\]

Here
\[
T_z^*\pi:
T_{\pi(z)}^*\N
\longrightarrow
T_z^*\Pp
\]
denotes the dual map of
\[
T_z\pi:
T_z\Pp
\longrightarrow
T_{\pi(z)}\N.
\]

The construction is summarized by the commutative diagram
\begin{equation}\label{Morse-pre}
\xymatrix{
\mathbb R
&
\Pp \ar[d]^{\pi} \ar[l]^{F}
&
T^*\N \ar[d]^{\pi_{T^*\N}}
\\
&
\N \ar@{=}[r]
&
\N .
}
\end{equation}

To describe $S_{T^*\N}$ more explicitly, define the map
\[
\kappa:
\Cr(F,\pi)
\longrightarrow
T^*\N
\]
by
\[
\langle \kappa(z),Z_\N(\pi(z))\rangle
=
\langle dF(z),Z_\Pp(z)\rangle,
\]
for every pair of $\pi$-related vector fields
\[
Z_\Pp\in\mathfrak X(\Pp),
\qquad
Z_\N\in\mathfrak X(\N),
\]
satisfying
\[
T\pi\circ Z_\Pp
=
Z_\N\circ\pi.
\]

The map $\kappa$ is an immersion and
\[
\dim S_{T^*\N}
=
\dim\Cr(F,\pi)
=
\dim\N.
\]
Its image coincides with the Lagrangian submanifold defined in \eqref{LagSub}.

\medskip

Now let $\N$ be an immersed submanifold of a manifold $\Q$.
Define
\[
T_\N^*\Q
=
\pi_\Q^{-1}(\N)
\subset T^*\Q.
\]

The map
\[
\xi:T_\N^*\Q\rightarrow T^*\N
\]
is defined by
\[
\langle\xi(p),Z_\N(n)\rangle
=
\langle p,Z_\N(n)\rangle,
\]
for all
\[
Z_\N\in\mathfrak X(\N),
\qquad
n=\pi_\Q(p).
\]

Let
\[
i:T_\N^*\Q\hookrightarrow T^*\Q
\]
denote the canonical inclusion.

If $S_{T^*\N}$ is a Lagrangian submanifold of $T^*\N$, then
\[
S_{T^*\Q}
=
i\circ\xi^{-1}(S_{T^*\N})
\]
is a Lagrangian submanifold of $T^*\Q$.

If $S_{T^*\N}$ is generated by a Morse family $F$, then we also say that $S_{T^*\Q}$ is generated by $F$.
This situation is represented by
\begin{equation}\label{Morse}
\xymatrix{
\mathbb R
&
\Pp \ar[d]^{\pi} \ar[l]^{F}
&
T^*\Q \ar[d]^{\pi_\Q}
\\
&
\N \ar@{^{(}->}[r]
&
\Q .
}
\end{equation}

In local coordinates $(q^i,\lambda^a)$, the generated Lagrangian submanifold is
\[
S_{T^*\Q}
=
\left\{
\left(
q^i,
\frac{\partial F}{\partial q^i}(q,\lambda)
\right)
\in T^*\Q
\;\middle|\;
\frac{\partial F}{\partial\lambda^a}(q,\lambda)=0
\right\}.
\]

This local formula is valid when $\N=\Q$. If $\N$ is a proper submanifold of $\Q$, the
fibres of $\xi:T_\N^*\Q\to T^*\N$ are the conormal spaces of $\N$, and these directions
must be retained: in adapted coordinates $(x,y)$ with $\N=\{y=0\}$,
\begin{equation}\label{eq:conormal-lift}
S_{T^*\Q}
=
\left\{
\left(x,0;\ p_x=\frac{\partial F}{\partial x}(x,\lambda),\ p_y=\mu\right)
\;\middle|\;
\frac{\partial F}{\partial\lambda^a}(x,\lambda)=0,
\ \mu\in\mathbb R^{\,\operatorname{codim}\N}
\right\},
\end{equation}
whose dimension is $\dim\Q$ as required. The multipliers $\mu$ play the role that Lagrange multipliers play in the constrained
systems of Sections~\ref{sec:discrete-morse-tulczyjew} and \ref{sec:discrete-dirac},
although there they arise as fibre variables of a Morse family over the full base rather
than as conormal directions of a smaller one. For an
immersion $j:\N\to\Q$ one works on $j^*T^*\Q$ with $\xi(n,p)=T^*_nj(p)$, and the object
generated is in general an immersed, not embedded, Lagrangian submanifold.

\medskip

Following Tulczyjew, a \emph{special symplectic structure} is a quintuple
$(\Rr,\N,\tau,\theta,A)$, where $\tau:\Rr\rightarrow\N$ is a fiber bundle,
$\theta\in\Omega^1(\Rr)$, and $A:\Rr\rightarrow T^*\N$ is a diffeomorphism satisfying
$\tau=\pi_\N\circ A$, $\theta=A^*\theta_\N$.

Since $(T^*\N,\omega_\N=-d\theta_\N)$ is symplectic, it follows that
$(\Rr,\omega=-d\theta)$ is also symplectic and $A^*\omega_\N=\omega$. Hence
$(\Rr,\omega)$ and $(T^*\N,\omega_\N)$ are symplectomorphic.

Let $(\Rr,\Q,\tau,\theta,A)$ be a special symplectic structure. If
$S_{T^*\Q}\subset T^*\Q$ is a Lagrangian submanifold, then $S=A^{-1}(S_{T^*\Q})$ is a
Lagrangian submanifold of $(\Rr,\omega)$.

If $S_{T^*\Q}$ is generated by a Morse family $F$ as in Diagram~\eqref{Morse}, then $S$
is also said to be generated by $F$. This situation is summarized by
\begin{equation}\label{Morse-Gen}
\xymatrix{
\mathbb R & \Pp \ar[d]^{\pi} \ar[l]^{F} & T^*\Q \ar[d]_{\pi_\Q} & \Rr \ar[l]_{A} \ar[dl]^{\tau} \\
& \N \ar@{^{(}->}[r] & \Q .
}
\end{equation}

We shall return to these constructions in the next section, where we introduce a
fundamental example of a special symplectic structure, namely the Tulczyjew symplectic
space $TT^*\Q$.

\subsection{Geometry of mechanical bundles}

Let $\Q$ be a smooth manifold. We denote by $\pi_\Q:T^*\Q\rightarrow\Q$ the canonical
projection of the cotangent bundle and by $\tau_\Q:T\Q\rightarrow\Q$ the canonical
projection of the tangent bundle.

Local coordinates on $T^*\Q$ are denoted by $(q^i,p_i)$, where $\pi_\Q(q^i,p_i)=q^i$,
while local coordinates on $T\Q$ are denoted by $(q^i,\dot q^i)$, where
$\tau_\Q(q^i,\dot q^i)=q^i$.

The cotangent bundle $T^*\Q$ carries the canonical one-form
$\theta_\Q\in\Omega^1(T^*\Q)$, defined by
\[
\theta_\Q(\alpha_q)(X_{\alpha_q})=\alpha_q\bigl(T\pi_\Q(X_{\alpha_q})\bigr),
\qquad
X_{\alpha_q}\in T_{\alpha_q}(T^*\Q),
\quad
\alpha_q\in T_q^*\Q.
\]
In local coordinates,
$\theta_\Q=p_i\,dq^i$, which is the Liouville one-form. The canonical symplectic form on
$T^*\Q$ is $\omega_\Q=-d\theta_\Q=dq^i\wedge dp_i$. Since $\omega_\Q$ is closed and
non-degenerate, $(T^*\Q,\omega_\Q)$ is a symplectic manifold.

Given two symplectic manifolds $(\M_1,\omega_1)$, $(\M_2,\omega_2)$, a diffeomorphism
$F:\M_1\rightarrow\M_2$ is called a symplectomorphism if $F^*\omega_2=\omega_1$.

We now turn to the tangent bundle $T\Q$. For a smooth function
$f:\Q\rightarrow\mathbb R$, its complete lift $f^T:T\Q\rightarrow\mathbb R$ is defined by
$f^T(v_q)=df(q)(v_q)$, $v_q\in T_q\Q$. In local coordinates,
$f^T(q,\dot q)=\dot q^i\,\partial f/\partial q^i$.

Given two tangent vectors $v_q,w_q\in T_q\Q$, the vertical lift of $v_q$ at $w_q$ is
defined by $(v_q)^V_{w_q}=\left.\frac{d}{dt}\right|_{t=0}(w_q+t\,v_q)$. If
$X\in\mathfrak X(\Q)$ is a vector field on $\Q$, its vertical lift to $T\Q$ is denoted by
$X^V$.

Let $\phi_t$ be the flow of $X$. The complete lift of $X$ is the infinitesimal generator
of the tangent lift $T\phi_t:T\Q\rightarrow T\Q$. If
$X=X^i(q)\,\partial/\partial q^i$, then its complete lift is
$X^T=X^i(q)\,\partial/\partial q^i+\dot q^j\,(\partial X^i/\partial q^j)\,
\partial/\partial\dot q^i$.

We now consider the tangent bundle of the cotangent bundle, $TT^*\Q$. The induced
coordinates on $TT^*\Q$ are denoted by $(q^i,p_i,\dot q^i,\dot p_i)$.

The complete lift of the canonical symplectic form $\omega_\Q$ to $TT^*\Q$ determines a
symplectic form $\Omega_\Q=-d_T\omega_\Q$, known as the Tulczyjew symplectic form. The
sign records the convention $\omega_\Q=-d\theta_\Q=dq^i\wedge dp_i$ fixed above, for
which $d_T\omega_\Q=d\dot q^i\wedge dp_i+dq^i\wedge d\dot p_i$; with Tulczyjew's
convention $\omega_\Q=d\theta_\Q=dp_i\wedge dq^i$ one has instead
$\Omega_\Q=d_T\omega_\Q$.

A remarkable feature of this symplectic structure is that it admits two globally defined
symplectic potentials, denoted by $\theta_1$, $\theta_2$. These are defined intrinsically
by $\theta_1=i_T\omega_\Q$, $\theta_2=d_T\theta_\Q$. Here the subscript $T$ refers to the
tautological vector field along $\tau_{T^*\Q}:TT^*\Q\to T^*\Q$, that is, to the
assignment sending a point $v\in TT^*\Q$ to the tangent vector $v$ itself: for a
$k$-form $\beta$ on $T^*\Q$, the contraction $i_T\beta$ is the $(k-1)$-form on $TT^*\Q$
given by
\begin{equation}\label{eq:iT}
(i_T\beta)_v(X_1,\dots,X_{k-1})
=
\beta_{\tau_{T^*\Q}(v)}
\bigl(v,\;T\tau_{T^*\Q}(X_1),\dots,T\tau_{T^*\Q}(X_{k-1})\bigr),
\end{equation}
and $d_T=[d,i_T]=d\circ i_T+i_T\circ d$ is the associated operator, the graded bracket
being an anticommutator because $i_T$ lowers the degree by one. Both $i_T$ and $d_T$ send
forms on $T^*\Q$ to forms on $TT^*\Q$, so $d_T$ is a degree-zero derivation \emph{along}
$\tau_{T^*\Q}^*$, in the sense that
\[
d_T(\beta\wedge\gamma)
=
d_T\beta\wedge\tau_{T^*\Q}^*\gamma
+
\tau_{T^*\Q}^*\beta\wedge d_T\gamma .
\]
It is precisely the complete, or tangent, lift of forms, and it extends the complete lift
of functions introduced above for $f:\Q\to\mathbb R$ to any $f:T^*\Q\to\mathbb R$, since
$d_Tf=i_T(df)$ is the derivative of $f$ along the tautological vector, which we again
write $f^T$.

In local coordinates on $TT^*\Q$, these one-forms are given by
\begin{equation}\label{thets}
\theta_1=-\dot p_i\,dq^i+\dot q^i\,dp_i,
\qquad
\theta_2=\dot p_i\,dq^i+p_i\,d\dot q^i.
\end{equation}
Their exterior derivatives differ by a sign, $d\theta_1=-d\theta_2$, and define the
canonical symplectic form on $TT^*\Q$:
\begin{equation}\label{Tstf}
\Omega_\Q=-d\theta_1=d\theta_2=d\dot p_i\wedge dq^i+dp_i\wedge d\dot q^i.
\end{equation}

\subsection{Tulczyjew's triple}

The Tulczyjew symplectic manifold $(TT^{*}\Q,\Omega_\Q)$ admits two natural special
symplectic structures.

The non-degeneracy of the canonical symplectic form $\omega_\Q$ induces the musical
isomorphism
\begin{equation}\label{beta1}
\beta_\Q:TT^{*}\Q\longrightarrow T^{*}T^{*}\Q,
\qquad
X\longmapsto \iota_X\omega_\Q .
\end{equation}
In \eqref{beta1} the letter $X$ denotes a point of $TT^{*}\Q$, that is, a tangent vector
$X\in T_\alpha(T^*\Q)$ at some $\alpha\in T^*\Q$, and $\omega_\Q$ is the symplectic form
on $T^*\Q$; the contraction $\iota_X\omega_\Q$ is therefore an element of
$T^*_\alpha(T^*\Q)$, as the stated target requires. Contracting instead with the
Tulczyjew form $\Omega_\Q$ would produce a covector on $TT^*\Q$ rather than on $T^*\Q$,
and would not define a map into $T^{*}T^{*}\Q$.

With the convention $\omega_\Q=-d\theta_\Q$ adopted above, $\beta_\Q$ is a
symplectomorphism onto $T^{*}T^{*}\Q$ equipped with its canonical symplectic structure,
since $\beta_\Q^*\theta_{T^*T^*\Q}=\theta_1$ and hence
$\beta_\Q^*\omega_{T^*T^*\Q}=-d\theta_1=\Omega_\Q$. In local coordinates,
\begin{equation}\label{beta2}
\beta_\Q(q^i,p_i,\dot q^i,\dot p_i)=(q^i,p_i,-\dot p_i,\dot q^i).
\end{equation}
Consequently, $(TT^{*}\Q,T^{*}\Q,\tau_{T^{*}\Q},\theta_1,\beta_\Q)$ is a special
symplectic structure.

\medskip

The second special symplectic structure is induced by the canonical involution on
$TT\Q$ and is given by Tulczyjew's isomorphism
$\alpha_\Q:TT^{*}\Q\longrightarrow T^{*}T\Q$. With the convention
$\omega_\Q=-d\theta_\Q$ adopted above, this map is \emph{anti}-symplectic for
$\Omega_\Q$: since $\alpha_\Q^*\theta_{T\Q}=\theta_2$ one has
$\alpha_\Q^*\omega_{T^*T\Q}=-d\theta_2=-\Omega_\Q$. Equivalently, $\alpha_\Q$ is a
symplectomorphism from $(TT^*\Q,-\Omega_\Q)$, and the special symplectic structure it
defines carries $-\Omega_\Q$. Under Tulczyjew's opposite convention
$\omega_\Q=d\theta_\Q=dp_i\wedge dq^i$ both wings are symplectic for the same form,
although $\beta_\Q$ then reads $(q^i,p_i,\dot q^i,\dot p_i)\mapsto(q^i,p_i,\dot p_i,-\dot
q^i)$ rather than \eqref{beta2}. The asymmetry is a matter of convention only, and is
immaterial for what follows, since a submanifold is Lagrangian for $\Omega$ if and only
if it is Lagrangian for $-\Omega$. In local coordinates,
\begin{equation}\label{defalpha}
\alpha_\Q(q^i,p_i,\dot q^i,\dot p_i)=(q^i,\dot q^i,\dot p_i,p_i).
\end{equation}
Hence, $(TT^{*}\Q,T\Q,T\pi_\Q,\theta_2,\alpha_\Q)$ is also a special symplectic
structure.

\medskip

Together, these two special symplectic structures define Tulczyjew's triple:
\begin{equation}\label{T}
\xymatrix{
T^{*}T\Q
\ar[dr]^{\pi_{T\Q}}
&&
TT^{*}\Q
\ar[dl]_{T\pi_\Q}
\ar[dr]^{\tau_{T^{*}\Q}}
\ar[rr]^{\beta_\Q}
\ar[ll]_{\alpha_\Q}
&&
T^{*}T^{*}\Q
\ar[dl]^{\pi_{T^{*}\Q}}
\\
&
T\Q
&&
T^{*}\Q
}
\end{equation}

\subsubsection{Lagrangian submanifolds of Tulczyjew's space}

In the geometric framework adopted in this paper, implicit Hamiltonian systems are represented by Lagrangian submanifolds of the symplectic manifold
\[
(TT^{*}\Q,\Omega_\Q).
\]

Let
\[
E\subset TT^{*}\Q
\]
be a submanifold defined locally by constraint functions
\[
\Phi^{A}:TT^{*}\Q\rightarrow\mathbb R,
\qquad
\Phi^{A}(q,p,\dot q,\dot p)=0.
\]

Since
\[
\dim TT^{*}\Q=4n,
\]
a Lagrangian submanifold has dimension \(2n\). Therefore, if \(E\) is described as a regular constraint submanifold, it must be locally defined by \(2n\) independent equations. Moreover, the defining constraints must be in involution on \(E\) with respect to the Poisson bracket induced by \(\Omega_\Q\) \cite{LiMa}.

The Poisson bracket associated with the symplectic form \eqref{Tstf} is
\begin{equation}\label{PB}
\{f,g\}
=
\frac{\partial f}{\partial \dot p_i}
\frac{\partial g}{\partial q^i}
-
\frac{\partial g}{\partial \dot p_i}
\frac{\partial f}{\partial q^i}
+
\frac{\partial f}{\partial p_i}
\frac{\partial g}{\partial \dot q^i}
-
\frac{\partial g}{\partial p_i}
\frac{\partial f}{\partial \dot q^i}.
\end{equation}

The image of a Hamiltonian vector field on \(T^*\Q\) is a Lagrangian submanifold of \(TT^*\Q\). Conversely, if a Lagrangian submanifold
\[
E\subset TT^*\Q
\]
is locally the image of a vector field
\[
X\in\mathfrak X(T^*\Q),
\]
then \(X\) is locally Hamiltonian.

Assume that
\[
E=\operatorname{Im}(X),
\]
where
\[
X
=
\phi^i(q,p)\frac{\partial}{\partial q^i}
+
\psi_i(q,p)\frac{\partial}{\partial p_i}.
\]

Then \(E\) is locally given by
\begin{equation}\label{Eedef}
E
=
\left\{
(q,p,\dot q,\dot p)\in TT^*\Q
\;\middle|\;
\dot q^i-\phi^i(q,p)=0,
\qquad
\dot p_i-\psi_i(q,p)=0
\right\}.
\end{equation}

Since \(E\) is Lagrangian, the Poisson brackets of the defining constraint functions in \eqref{Eedef} vanish on \(E\). Written out, these integrability conditions are
\begin{equation}\label{eq:integrability}
\frac{\partial\phi^i}{\partial p_j}
=
\frac{\partial\phi^j}{\partial p_i},
\qquad
\frac{\partial\psi_i}{\partial q^j}
=
\frac{\partial\psi_j}{\partial q^i},
\qquad
\frac{\partial\psi_i}{\partial p_j}
=
-\frac{\partial\phi^j}{\partial q^i},
\end{equation}
and they are precisely the conditions for the one-form
\[
\phi
=
-\psi_i\,dq^i
+
\phi^i\,dp_i
=
\iota_X\omega_\Q
=
\beta_\Q(X)
\]
to be closed. Intrinsically, therefore, \(E=\operatorname{Im}(X)\) is Lagrangian if and only if the one-form \(\beta_\Q(X)\) on \(T^*\Q\) is closed.

Therefore, at least locally, there exists a function
\[
H:T^*\Q\rightarrow\mathbb R
\]
such that
\[
dH=\phi.
\]

Consequently,
\[
\phi^i
=
\frac{\partial H}{\partial p_i},
\qquad
\psi_i
=
-\frac{\partial H}{\partial q^i},
\]
and the vector field \(X\) is Hamiltonian.

\medskip

By the Maslov--H\"ormander theorem
\cite{BeCaSi09,Be11,Ca91,Ca15},
every Lagrangian submanifold
\[
E\subset TT^*\Q
\]
can be generated locally by a Morse family.

This situation is represented by the diagram
\begin{equation}\label{Morse-Tul}
\xymatrix{
\mathbb R
&
\Pp \ar[d]^{\pi} \ar[l]^{F}
&
T^*T^*\Q \ar[d]_{\pi_{T^*\Q}}
&
TT^*\Q \ar[l]_{\beta_\Q} \ar[dl]^{\tau_{T^*\Q}}
\\
&
\N \ar@{^{(}->}[r]
&
T^*\Q .
}
\end{equation}

Locally, the generated Lagrangian submanifold is
\begin{equation}\label{MFGen}
E
=
\left\{
\left(
q^i,p_i,
\frac{\partial F}{\partial p_i},
-\frac{\partial F}{\partial q^i}
\right)
\in TT^*\Q
\;\middle|\;
\frac{\partial F}{\partial \lambda^a}(q,p,\lambda)=0
\right\}.
\end{equation}

\subsection{Dirac structures and mechanics}

We follow the treatment of Dirac structures in Lagrangian mechanics developed by
Yoshimura and Marsden \citep{YoshimuraMarsden2006a,YoshimuraMarsden2006b}, whose
discrete counterpart is due to Leok and Ohsawa \citep{leok2011variational}.

\subsubsection{The induced Dirac structure on $T^*\Q$}

Let
\[
\pi_\Q:T^*\Q\rightarrow\Q
\]
denote the canonical cotangent bundle projection, and let
\[
\beta_\Q:TT^*\Q\rightarrow T^*T^*\Q
\]
be the right wing of Tulczyjew's triple associated with the canonical symplectic structure on \(T^*\Q\).

Let
\[
\Delta_\Q\subset T\Q
\]
be a distribution of constant rank.
Its lift to \(T^*\Q\) is defined by
\[
\Delta_{T^*\Q}
:=
(T\pi_\Q)^{-1}(\Delta_\Q)
\subset TT^*\Q.
\]

Denote by
\[
\Delta_\Q^\circ\subset T^*\Q
\]
the annihilator of \(\Delta_\Q\), and by
\[
\Delta_{T^*\Q}^\circ
=
\pi_\Q^*(\Delta_\Q^\circ)
\subset T^*T^*\Q
\]
the annihilator of \(\Delta_{T^*\Q}\).

The vector subbundle
\begin{equation}\label{eq:Dirac-induced}
D_{\Delta_\Q}
=
\left\{
(v,\alpha)\in TT^*\Q\oplus T^*T^*\Q
\;\middle|\;
v\in\Delta_{T^*\Q},
\quad
\alpha-\beta_\Q(v)\in\Delta_{T^*\Q}^\circ
\right\}
\end{equation}
defines a Dirac structure on \(T^*\Q\).
It is called the \emph{induced Dirac structure} associated with the distribution \(\Delta_\Q\).

\subsubsection{Dirac structures and mechanics}

Let
\[
L:T\Q\rightarrow\mathbb R
\]
be a Lagrangian function.
The \emph{Dirac differential} of \(L\) is the mapping
\[
\mathfrak{D}L:T\Q\rightarrow T^*T^*\Q,
\qquad
\mathfrak{D}L:=\gamma_\Q\circ dL,
\]
where
\[
\gamma_\Q:T^*T\Q\rightarrow T^*T^*\Q
\]
is the canonical diffeomorphism induced by Tulczyjew's construction.

In local coordinates,
\[
\mathfrak{D}L(q^i,v^i)
=
\left(
q^i,
\frac{\partial L}{\partial v^i},
-\frac{\partial L}{\partial q^i},
v^i
\right).
\]

Let
\[
D\subset TT^*\Q\oplus T^*T^*\Q
\]
be a Dirac structure on \(T^*\Q\), and let
\[
X\in\mathfrak X(T^*\Q)
\]
be a vector field.

A \emph{Lagrange--Dirac system} is defined by the condition
\begin{equation}\label{eq:LDS}
(X,\mathfrak{D}L)\in D,
\end{equation}
understood at a common base point. This is a genuine restriction, not a formality:
$\mathfrak DL(q,v)$ is a covector based at $\bigl(q,\partial L/\partial v\bigr)$, whereas
$X$ is a vector based at an independently chosen $(q,p)$, and for degenerate $L$ the
velocity cannot be recovered from $(q,p)$. The condition is therefore imposed on the
fibre product
\[
\mathcal K
=
\left\{(q,v,p)\in T\Q\oplus_\Q T^*\Q
\;\middle|\;
p=\frac{\partial L}{\partial v}(q,v)\right\},
\]
on which one asks that
$\bigl(X(q,v,p),\mathfrak DL(q,v)\bigr)\in D(q,p)$ for a partial vector field $X$ along
$\mathcal K$. With this understood, the coordinate equations below are correct as
written.

In particular, if
\[
D=D_{\Delta_\Q},
\]
then \eqref{eq:LDS} becomes
\[
T\pi_\Q(X)\in\Delta_\Q,
\qquad
\beta_\Q(X)-\mathfrak{D}L
\in
\Delta_{T^*\Q}^\circ.
\]

Writing locally
\[
X
=
\dot q^i\frac{\partial}{\partial q^i}
+
\dot p_i\frac{\partial}{\partial p_i},
\]
one obtains
\begin{equation}\label{eq:LDS-coord}
\dot q^i\frac{\partial}{\partial q^i}\in\Delta_\Q(q),
\qquad
\dot q^i=v^i,
\qquad
p_i=\frac{\partial L}{\partial v^i},
\qquad
\left(\dot p_i-\frac{\partial L}{\partial q^i}\right)dq^i
\in
\Delta_\Q^\circ(q).
\end{equation}

These equations define the nonholonomic Lagrange--Dirac dynamics associated with the pair \((L,\Delta_\Q)\).

\medskip

Implicit Hamiltonian systems are defined in a completely analogous manner.

Let
\[
H:T^*\Q\rightarrow\mathbb R
\]
be a Hamiltonian function and let
\[
D\subset TT^*\Q\oplus T^*T^*\Q
\]
be a Dirac structure.

An implicit Hamiltonian system is specified by the condition
\[
(X,dH)\in D,
\qquad
X\in\mathfrak X(T^*\Q).
\]

When
\[
D=D_{\Delta_\Q},
\]
this condition becomes
\[
T\pi_\Q(X)\in\Delta_\Q,
\qquad
\beta_\Q(X)-dH
\in
\Delta_{T^*\Q}^\circ.
\]

In local coordinates, one obtains the nonholonomic Hamilton equations
\citep{BaSn1993,ScMa1994}
\[
\dot q^i\frac{\partial}{\partial q^i}\in\Delta_\Q(q),
\qquad
\dot q^i
=
\frac{\partial H}{\partial p_i},
\qquad
\left(\dot p_i
+
\frac{\partial H}{\partial q^i}\right)dq^i
\in
\Delta_\Q^\circ(q).
\]

When
\[
\Delta_\Q=T\Q,
\]
the annihilator \(\Delta_\Q^\circ\) vanishes and the above equations reduce to the standard Hamilton equations
\[
\dot q^i
=
\frac{\partial H}{\partial p_i},
\qquad
\dot p_i
=
-\frac{\partial H}{\partial q^i}.
\]

\section{Discrete preliminaries}

\subsection{Discrete variational mechanics}

Let $\Q$ be a configuration manifold.
A discrete Lagrangian is a smooth function
\[
L_d:\Q\times\Q\rightarrow\mathbb R.
\]

Given a discrete path
\[
\{\bq_k\}_{k=0}^{N},
\]
the associated discrete action is
\[
G_d
=
\sum_{k=0}^{N-1}
L_d(\bq_k,\bq_{k+1}).
\]

Hamilton's principle for the discrete action, with fixed endpoints, yields the discrete Euler--Lagrange equations
\[
D_2L_d(\bq_{k-1},\bq_k)
+
D_1L_d(\bq_k,\bq_{k+1})
=
0.
\]

The discrete Legendre transforms are defined by
\[
F^+_{L_d}(\bq_k,\bq_{k+1})
=
\bigl(
\bq_{k+1},
D_2L_d(\bq_k,\bq_{k+1})
\bigr),
\]
and
\[
F^-_{L_d}(\bq_k,\bq_{k+1})
=
\bigl(
\bq_k,
-D_1L_d(\bq_k,\bq_{k+1})
\bigr).
\]

If both maps are local diffeomorphisms, the discrete Lagrangian is said to be regular.
In that case, the discrete Euler--Lagrange equations determine a symplectic evolution map on $T^*\Q$.

When either discrete Legendre transform fails to be locally invertible, the discrete Lagrangian is degenerate and the resulting dynamics becomes implicit. The purpose of the present work is precisely to develop a geometric framework for such implicit discrete systems using discrete generating families, discrete Tulczyjew triples, and discrete Dirac structures.

The discrete variational formulation reviewed above describes regular systems through discrete Euler--Lagrange equations and symplectic evolution maps. In order to treat degenerate Lagrangians, implicit dynamics, and constrained systems, additional geometric structures are required.

The aim of this section is to develop discrete counterparts of Morse families, special symplectic structures, and Tulczyjew's triple. These constructions provide the geometric framework underlying the discrete Dirac and Hamilton--Jacobi theories developed in the subsequent sections.

\subsection{Discrete generating families}

Let
\[
(\Pp_d,\pi_d,\N_d)
\]
be a fibration between discrete configuration spaces, regarded as a discretization of a smooth bundle
\[
(\Pp,\pi,\N).
\]

A smooth function
\[
F:\Pp_d\rightarrow\mathbb R
\]
is called a discrete generating family.

The \emph{discrete critical set} of $F$ relative to the fibration $\pi_d$ is defined by
\[
\Cr(F,\pi_d)
:=
\left\{
z\in\Pp_d
\;\middle|\;
\langle dF(z),V\rangle=0,
\quad
\forall V\in V_z\Pp_d
\right\},
\]
where
\[
V\Pp_d:=\ker(T\pi_d)
\]
denotes the vertical bundle of the discrete fibration.

So that the base is unambiguous, we fix once and for all a base manifold $\N_d$ with
local coordinates $x^A$ and write $(x^A,\lambda^a)$ for the induced bundle coordinates
on $\Pp_d$. Three bases occur below: $\N_d=\Q$, used in the remainder of this subsection, in the
discrete special symplectic structures of Section~\ref{sec:disc-sss}, and for the
wavefront families of Sections~\ref{sec:composition-morse}--\ref{sec:reduction} and
\ref{sec:optical-application};
$\N_d=\Q\times\Q$, with $x=(q_k,q_{k+1})$, for Type--I families; and
$\N_d=\Q\times\Q^*$, with $x=(q_k,p_{k+1})$, for Type--II families. In each case the
generated object \eqref{eq:LF} lives in the cotangent bundle of \emph{that} base, so
that the diagram \eqref{eq:disc-morse} below depicts the first case only.

In these coordinates the discrete critical set is determined by the equations
\[
\frac{\partial F}{\partial\lambda^a}=0 .
\]

The generating family $F$ is a \emph{discrete Morse family} if and only if the Jacobian
matrix of these equations with respect to \emph{all} the variables $(x^A,\lambda^a)$ has
maximal rank $m_\lambda$, that is, if and only if
\begin{equation}\label{eq:disc-gen-rank}
\left(
\frac{\partial^2F}
{\partial\lambda^a\partial x^A}
\quad
\frac{\partial^2F}
{\partial\lambda^a\partial\lambda^b}
\right)
\end{equation}
has maximal rank. Every base coordinate must appear here; omitting the $q_{k+1}$ or
$p_{k+1}$ columns gives a strictly stronger and incorrect condition.

In that case,
\[
\dim\Cr(F,\pi_d)
=
\dim\N_d ,
\]
and, as in the continuous theory, $F$ generates an immersed Lagrangian submanifold of
the cotangent bundle \emph{of that base},
\begin{equation}\label{eq:LF}
\begin{aligned}
L_F
&=
\left\{
\eta\in T^*\N_d
\;\middle|\;
(T_z^*\pi_d)\,\eta=dF(z)
\ \text{for some }z\in\Cr(F,\pi_d)\text{ with }\pi_d(z)=\pi_{T^*\N_d}(\eta)
\right\}
\\
&=
\left\{
\left(x,\frac{\partial F}{\partial x}(x,\lambda)\right)
\;\middle|\;
\frac{\partial F}{\partial\lambda^a}(x,\lambda)=0
\right\}.
\end{aligned}
\end{equation}

In the special case $\N_d=\Q$, and using the standing identification
$T^*\Q\simeq\Q\times\Q^*$, \eqref{eq:LF} reads
\begin{equation}\label{eq:disc-lag-sub}
S_{\Q\times\Q^*}
=
\left\{
\left(
q^i,
\frac{\partial F}{\partial q^i}(q,\lambda)
\right)
\in
\Q\times\Q^*
\;\middle|\;
\frac{\partial F}{\partial\lambda^a}(q,\lambda)=0
\right\}.
\end{equation}
If the intended base is a proper submanifold of $\Q$, the conormal directions of
\eqref{eq:conormal-lift} must be adjoined as in the continuous case.

The construction is represented by the diagram
\begin{equation}\label{eq:disc-morse}
\xymatrix{
\mathbb R
&
\Pp_d \ar[d]^{\pi_d} \ar[l]^{F}
&
\Q\times\Q^* \ar[d]^{\pi_\Q}
\\
&
\N_d \ar@{^{(}->}[r]
&
\Q .
}
\end{equation}

Discrete generating families provide a unified framework for describing explicit and implicit discrete dynamics, including systems arising from degenerate discrete Lagrangians and constrained variational principles.

\subsection{Discrete special symplectic structures}
\label{sec:disc-sss}

The continuous notion of a special symplectic structure is based on a symplectomorphism with a cotangent bundle. In the discrete setting we adopt the same philosophy, replacing cotangent bundles by the discrete phase space $\Q\times\Q^*$.

A discrete special symplectic structure consists of a quadruple
\[
\bigl((\Rr,\Omega_\Rr),\N_d,\tau,A_d\bigr),
\]
where $\N_d$ is the base of the generating family,
\[
\tau:\Rr\rightarrow\N_d
\]
is a surjective submersion and
\[
A_d:\Rr\rightarrow T^*\N_d
\]
is a diffeomorphism satisfying
\[
\tau=\pi_{\N_d}\circ A_d,
\qquad
A_d^*\,\omega_{\N_d}=\Omega_\Rr .
\]
The symplectic form on $\Rr$ is thus part of the data, or equivalently is \emph{defined}
by the last identity; without it the requirement that $A_d$ be symplectic would be
circular, since $\Rr$ carries no a priori symplectic structure. For $\N_d=\Q$ the
target is $T^*\Q\simeq\Q\times\Q^*$ under the standing vector-space hypothesis, and
$\omega_{\N_d}$ is the canonical form $dq^i\wedge dp_i$ transported by that
identification; the two wings of the discrete Tulczyjew triple constructed in the next
subsection are the instances $\N_d=\Q\times\Q$ and $\N_d=\Q\times\Q^*$, up to the sign
of the target form recorded there.

If
\[
L_F
\subset
T^*\N_d
\]
is the Lagrangian submanifold \eqref{eq:LF} generated by a discrete generating family
$F$, then its inverse image
\[
S=A_d^{-1}(L_F)
\]
is a Lagrangian submanifold of $\Rr$.

In this case we also say that $S$ is generated by the discrete generating family $F$.

The situation is summarized by the diagram
\begin{equation}\label{eq:disc-morse-gen}
\xymatrix{
\mathbb R
&
\Pp_d \ar[d]^{\pi_d} \ar[l]^{F}
&
T^*\N_d \ar[d]_{\pi_{\N_d}}
&
\Rr \ar[l]_{A_d} \ar[dl]^{\tau}
\\
&
\N_d \ar@{=}[r]
&
\N_d .
}
\end{equation}

These constructions provide the geometric setting needed to introduce discrete analogues of Tulczyjew's triple and discrete implicit Hamiltonian systems.

\subsection{Geometry of discrete mechanical bundles}

\subsubsection{Discrete Tulczyjew's triple}

We now construct a discrete analogue of Tulczyjew's triple. In contrast with the continuous theory, where the Tulczyjew triple is formulated in terms of symplectic manifolds and vector bundle morphisms, the discrete version is described by symplectic maps generated by discrete generating functions.

Throughout this section we assume that $\Q$ is a finite-dimensional vector space, so that
\[
T^*\Q \simeq \Q\times\Q^*.
\]

The discrete counterpart of the Tulczyjew symplectic space $TT^*\Q$ is therefore taken to be
\[
T^*\Q\times T^*\Q.
\]

The discrete Tulczyjew triple relates the spaces
\[
T^*(\Q\times\Q),
\qquad
T^*\Q\times T^*\Q,
\qquad
T^*(\Q\times\Q^*)
\]
through diffeomorphisms associated with Type I and Type II generating functions, whose
behaviour with respect to the canonical forms on the three spaces is recorded after
\eqref{eq:disc-Omega} below.

Let
\[
F:T^*\Q\rightarrow T^*\Q,
\qquad
(q_0,p_0)\mapsto(q_1,p_1),
\]
be a symplectic map.

Associated with $F$ are the maps
\begin{align*}
F_1 &: \Q\times\Q \longrightarrow \Q^*\times\Q^*,
&
(q_0,q_1) &\longmapsto (p_0,p_1),
\\[2mm]
F_2 &: \Q\times\Q^* \longrightarrow \Q^*\times\Q,
&
(q_0,p_1) &\longmapsto (p_0,q_1).
\end{align*}

\paragraph{Type I generating functions}

Assume that $(p_0,p_1)$ can be expressed locally as functions of $(q_0,q_1)$.

The symplecticity condition implies
\[
d\!\left(
-p_{0i}\,dq_0^i
+
p_{1i}\,dq_1^i
\right)=0.
\]

Hence, by the Poincar\'e lemma, there exists locally a function
\[
S_1:\Q\times\Q\rightarrow\mathbb R
\]
such that
\[
-p_{0i}\,dq_0^i
+
p_{1i}\,dq_1^i
=
dS_1(q_0,q_1).
\]

Therefore
\begin{equation}\label{eq:S1}
p_{0i}
=
-\frac{\partial S_1}{\partial q_0^i},
\qquad
p_{1i}
=
\frac{\partial S_1}{\partial q_1^i}.
\end{equation}

This determines the diffeomorphism
\[
\alpha_\Q^{d}:
T^*\Q\times T^*\Q
\longrightarrow
T^*(\Q\times\Q),
\]
given locally by
\[
\alpha_\Q^{d}
\big((q_0,p_0),(q_1,p_1)\big)
=
(q_0,q_1,-p_0,p_1).
\]

\paragraph{Type II generating functions}

Assume now that $(p_0,q_1)$ can be expressed locally as functions of $(q_0,p_1)$.

The symplecticity condition becomes
\[
d\!\left(
p_{0i}\,dq_0^i
+
q_1^i\,dp_{1i}
\right)=0.
\]

Consequently there exists locally a generating function
\[
S_2:\Q\times\Q^*\rightarrow\mathbb R
\]
such that
\[
p_{0i}\,dq_0^i
+
q_1^i\,dp_{1i}
=
dS_2(q_0,p_1).
\]

Hence
\begin{equation}\label{eq:S2}
p_{0i}
=
\frac{\partial S_2}{\partial q_0^i},
\qquad
q_1^i
=
\frac{\partial S_2}{\partial p_{1i}}.
\end{equation}

This determines the diffeomorphism
\[
\beta_\Q^{d+}:
T^*\Q\times T^*\Q
\longrightarrow
T^*(\Q\times\Q^*),
\]
given by
\[
\beta_\Q^{d+}
\big((q_0,p_0),(q_1,p_1)\big)
=
(q_0,p_1,p_0,q_1).
\]

The maps
\[
\alpha_\Q^d
\qquad\text{and}\qquad
\beta_\Q^{d+}
\]
constitute the $(+)$-discrete Tulczyjew triple.

$Discrete\; symplectic\; forms.$
Let
$\Theta_1$
and
$\Theta_2$
be the canonical one-forms on
\[
T^*(\Q\times\Q)
\qquad\text{and}\qquad
T^*(\Q\times\Q^*),
\]
respectively.

Their pullbacks define
\[
\theta_1^{d+}
=
(\alpha_\Q^d)^*\Theta_1
=
-p_{0i}\,dq_0^i
+
p_{1i}\,dq_1^i,
\]
and
\[
\theta_2^{d+}
=
(\beta_\Q^{d+})^*\Theta_2
=
p_{0i}\,dq_0^i
+
q_1^i\,dp_{1i}.
\]

The associated symplectic form on
\[
T^*\Q\times T^*\Q
\]
is
\begin{equation}
\Omega_{T^*\Q\times T^*\Q}
=
-d\theta_1^{d+}
=
d\theta_2^{d+}
=
dq_1^i\wedge dp_{1i}
-
dq_0^i\wedge dp_{0i}
=
-\operatorname{pr}_1^*\omega_\Q+\operatorname{pr}_2^*\omega_\Q ,
\label{eq:disc-Omega}
\end{equation}
that is, the space of one-step relations is $\overline{T^*\Q}\times T^*\Q$, the first
factor carrying the reversed sign. This is the form with respect to which ``Lagrangian
relation'' is to be understood throughout.

As in the continuous case the two wings are not symplectic for the same choice of
canonical form on the targets: with $\omega=-d\Theta$ on both,
$(\alpha_\Q^d)^*(-d\Theta_1)=\Omega_{T^*\Q\times T^*\Q}$ while
$(\beta_\Q^{d+})^*(-d\Theta_2)=-\Omega_{T^*\Q\times T^*\Q}$, so the right wing is
anti-symplectic unless $T^*(\Q\times\Q^*)$ is equipped with the reversed form
$+d\Theta_2$, which is the convention used in the standard discrete Tulczyjew
construction. Nothing below depends on the choice, Lagrangian submanifolds being
insensitive to an overall sign.

\subsubsection{Discrete Lagrangian submanifolds of the discrete Tulczyjew triple}

The symplectic manifold
\[
\bigl(
T^*\Q\times T^*\Q,
\Omega_{T^*\Q\times T^*\Q}
\bigr)
\]
plays the role of the Tulczyjew symplectic space in the discrete theory.

In the framework developed here, implicit discrete Hamiltonian systems are represented by Lagrangian submanifolds
\[
E\subset T^*\Q\times T^*\Q.
\]

Suppose that
\[
E=\operatorname{Im}(X)
\]
for a section $X$ of the mixed projection
\[
\rho:T^*\Q\times T^*\Q\longrightarrow\Q\times\Q^*,
\qquad
\rho\bigl((q_0,p_0),(q_1,p_1)\bigr)=(q_0,p_1),
\]
namely
\[
X(q_0,p_1)
=
\Bigl(
\bigl(q_0,\phi_1(q_0,p_1)\bigr),
\bigl(\phi_0(q_0,p_1),p_1\bigr)
\Bigr).
\]
We emphasise that $X$ is a section of $\rho$ and not a vector field on
$\Q\times\Q^*$; the latter would have its image in $T(\Q\times\Q^*)$ rather than in
$T^*\Q\times T^*\Q$, and the component functions $\phi_0$ and $\phi_1$ carry the values
of $q_1$ and $p_0$ respectively.

Then $E$ is locally described by
\begin{equation}\label{eq:discE}
E=
\left\{
(q_0,p_0,q_1,p_1)
\;\middle|\;
q_1=\phi_0(q_0,p_1),
\qquad
p_0=\phi_1(q_0,p_1)
\right\}.
\end{equation}

If $E$ is Lagrangian, the induced one-form
\[
\phi
=
X^*\theta_2^{d+}
=
\phi_1\,dq_0
+
\phi_0\,dp_1
\]
is closed. Therefore, locally there exists a function
\[
H:\Q\times\Q^*\rightarrow\mathbb R
\]
such that
\[
dH=\phi.
\]

Consequently,
\[
q_1=\frac{\partial H}{\partial p_1},
\qquad
p_0=\frac{\partial H}{\partial q_0}.
\]

Thus explicit discrete Hamiltonian systems appear as particular Lagrangian submanifolds of
\[
T^*\Q\times T^*\Q.
\]

\subsubsection{Discrete Morse families on Tulczyjew's space}
\label{sec:discrete-morse-tulczyjew}

Lagrangian submanifolds of
\[
T^*\Q\times T^*\Q
\]
may be generated locally by discrete Morse families.

Let
\[
F(q_0,p_1,\lambda)
\]
be a discrete Morse family defined on a bundle
\[
\Pp_d
\longrightarrow
\Q\times\Q^*.
\]

The corresponding Lagrangian submanifold is obtained through the right wing
\[
\beta_\Q^{d+}:
T^*\Q\times T^*\Q
\longrightarrow
T^*(\Q\times\Q^*)
\]
of the discrete Tulczyjew triple.

The construction is summarized by
\begin{equation}\label{eq:discMorseGen}
\xymatrix{
\mathbb R
&
\Pp_d \ar[d]^{\pi_d} \ar[l]^{F}
&
T^*(\Q\times\Q^*) \ar[d]_{\pi_{\Q\times\Q^*}}
&
T^*\Q\times T^*\Q
\ar[l]_{\beta_\Q^{d+}}
\ar[dl]
\\
&
\N_d \ar@{^{(}->}[r]
&
\Q\times\Q^* .
}
\end{equation}

In local coordinates, the generated Lagrangian submanifold is
\begin{equation}\label{eq:discMF}
E=
\left\{
(q_0,p_0,q_1,p_1)
\;\middle|\;
p_0=\frac{\partial F}{\partial q_0},
\qquad
q_1=\frac{\partial F}{\partial p_1},
\qquad
\frac{\partial F}{\partial\lambda^a}=0
\right\}.
\end{equation}

When no auxiliary variables are present, the relation is explicitly parametrised by the
mixed variables $(q_0,p_1)$; it need not, however, be explicit as a time-step map, since
recovering $p_1$ from $p_0=\partial F/\partial q_0(q_0,p_1)$ requires a twist condition
such as
\begin{equation}\label{eq:twist}
\det\frac{\partial^2F}{\partial q_0\,\partial p_1}\neq0 .
\end{equation}
When auxiliary variables are present, the stationarity conditions
\[
\frac{\partial F}{\partial\lambda^a}=0
\]
in addition constrain the relation.

As a simple example, take
\[
F(q_0,p_1,\lambda)
=
H(q_0,p_1)
+
\lambda_\alpha\,\Phi^\alpha(q_0,p_1),
\]
with $\rank d\Phi=m_\lambda$, so that $F$ is a Morse family. By \eqref{eq:discMF} it
generates the constrained discrete Hamiltonian system
\[
\Phi^\alpha(q_0,p_1)=0,
\qquad
p_0=\frac{\partial H}{\partial q_0}
+
\lambda_\alpha\frac{\partial\Phi^\alpha}{\partial q_0},
\qquad
q_1=\frac{\partial H}{\partial p_1}
+
\lambda_\alpha\frac{\partial\Phi^\alpha}{\partial p_1}.
\]
The multiplier terms are essential and cannot be dropped: the unshifted equations
$p_0=\partial H/\partial q_0$, $q_1=\partial H/\partial p_1$ together with $m_\lambda$
constraints would cut out a set of dimension $2n-m_\lambda$, which for $m_\lambda>0$
cannot be Lagrangian in the $4n$-dimensional discrete Tulczyjew space.

These constructions provide the geometric foundation for the discrete Dirac systems and implicit Hamiltonian dynamics studied in the following sections.
\section{Discrete Dirac mechanics}
\label{sec:discrete-dirac}

In this section we construct a discrete counterpart of the induced Dirac structure
\[
D_{\Delta_\Q}
\subset
TT^*\Q\oplus T^*T^*\Q.
\]
We begin by introducing discrete analogues of constraint distributions.

\subsection{Discrete constraint distributions}
\label{sec:discrete-constraints}

A natural discrete analogue of a constraint distribution
\[
\Delta_\Q\subset T\Q
\]
is a subset
\[
\Delta_\Q^{d+}\subset\Q\times\Q.
\]

Let \(\Delta_\Q\subset T\Q\) be a distribution of \emph{constant rank}, and let
\[
\Delta_\Q^\circ\subset T^*\Q
\]
be its annihilator.
Assume that \(\Delta_\Q^\circ\) is locally generated by \(m\) independent one-forms
\[
\{\omega^a\}_{a=1}^m,
\qquad
m=\dim\Q-\dim\Delta_\Q.
\]

In local coordinates,
\[
\omega^a
=
A_i^a(q)\,dq^i,
\]
so that, for \(v_q\in T_q\Q\),
\begin{equation}\label{eq:omega-coord}
\omega^a(q,v_q)
=
A_i^a(q)\,v^i.
\end{equation}

Let
\[
\mathcal R:T\Q\rightarrow\Q
\]
be a retraction, defined and inverted on a neighbourhood of the zero section. Recall
that \(\mathcal R\) satisfies
\[
\mathcal R_q(0_q)=q,
\qquad
T_{0_q}\mathcal R_q=\mathrm{id}_{T_q\Q}.
\]

Using the inverse retraction, which is available for \(q_1\) in a neighbourhood of
\(q_0\), we define the discrete constraint functions
\begin{equation}\label{eq:omega_dplus}
\omega^{a}_{d+}(q_0,q_1)
:=
\omega^a
\bigl(
q_0,
\mathcal R_{q_0}^{-1}(q_1)
\bigr).
\end{equation}

The \((+)\)-discrete constraint distribution is then defined by
\begin{equation}\label{eq:DeltaQd}
\Delta_\Q^{d+}
=
\left\{
(q_0,q_1)\in\Q\times\Q
\;\middle|\;
\omega^a_{d+}(q_0,q_1)=0,
\quad
a=1,\dots,m
\right\}.
\end{equation}

The corresponding lifted discrete distribution is
\[
\Delta_{T^*\Q}^{d+}
:=
(\pi_\Q\times\pi_\Q)^{-1}
(\Delta_\Q^{d+})
\subset
T^*\Q\times T^*\Q.
\]

Let
\[
\pi_\Q^{d+}:\Q\times\Q^*\rightarrow\Q,
\qquad
(q_0,p_1)\mapsto q_0.
\]

The annihilator \(\Delta_\Q^\circ\) induces a codistribution
\[
\Delta_{\Q\times\Q^*}^{\circ}
:=
(\pi_\Q^{d+})^*
(\Delta_\Q^\circ)
\subset
T^*(\Q\times\Q^*).
\]

The associated \((+)\)-discrete induced Dirac structure is defined by
\begin{equation}\label{eq:discDiracStructure}
D_{\Delta_\Q}^{d+}
=
\Bigl\{
\bigl((z_0,z_1),\alpha\bigr)
\;\Big|\;
(z_0,z_1)\in\Delta_{T^*\Q}^{d+},
\quad
\alpha-\beta_\Q^{d+}(z_0,z_1)
\in
\Delta_{\Q\times\Q^*}^{\circ}
\Bigr\},
\end{equation}
where
\[
z_0=(q_0,p_0),
\qquad
z_1=(q_1,p_1),
\]
and where the difference is taken at a common base point: \(\alpha\) is required to lie
in \(T^*_{(q_0,p_1)}(\Q\times\Q^*)\), the same fibre as \(\beta_\Q^{d+}(z_0,z_1)\), so
that \(D_{\Delta_\Q}^{d+}\) is a subbundle of the fibre product
\[
(T^*\Q\times T^*\Q)
\times_{\Q\times\Q^*}
T^*(\Q\times\Q^*)
\]
over \(\Q\times\Q^*\).

This construction is the discrete counterpart of the induced Dirac structure appearing
in continuous nonholonomic mechanics. As is customary in nonholonomic mechanics, we use
``Dirac structure'' to mean maximal isotropy with respect to the pairing induced by
\(\beta_\Q^{d+}\), without requiring Courant integrability; for a non-integrable
\(\Delta_\Q\) the object is an almost Dirac structure.

\subsection{$(+)$--Discrete Lagrange--Dirac systems}

Let
\[
L_d:\Q\times\Q\rightarrow\mathbb R
\]
be a discrete Lagrangian.

The \((+)\)-discrete Dirac differential is defined by
\[
\mathfrak D^+L_d
:=
\Gamma_\Q^{d+}\circ dL_d,
\qquad
\Gamma_\Q^{d+}
=
\beta_\Q^{d+}\circ(\alpha_\Q^d)^{-1}.
\]

In local coordinates,
\[
\mathfrak D^+L_d(q_0,q_1)
=
\bigl(
q_0,
D_2L_d(q_0,q_1),
-D_1L_d(q_0,q_1),
q_1
\bigr).
\]

A \((+)\)-discrete Lagrange--Dirac system is specified by the condition
\[
\bigl(
X_d,
\mathfrak D^+L_d(q_0,q_1)
\bigr)
\in
D_{\Delta_\Q}^{d+},
\]
where
\[
X_d=
\bigl((q_0,p_0),(q_1,p_1)\bigr).
\]

This condition yields the discrete Lagrange--Dirac equations
\begin{equation}\label{eq:dLD}
(q_0,q_1)\in\Delta_\Q^{d+},
\qquad
p_1=D_2L_d(q_0,q_1),
\qquad
p_0+D_1L_d(q_0,q_1)
\in
\Delta_\Q^\circ(q_0).
\end{equation}

Equivalently,
\[
\omega_{d+}^a(q_0,q_1)=0,
\qquad
p_1=D_2L_d(q_0,q_1),
\qquad
p_0+D_1L_d(q_0,q_1)
=
\mu_a\,\omega^a(q_0).
\]

\subsection{$(+)$--Discrete nonholonomic Hamiltonian systems}

Let
\[
H_{d+}:\Q\times\Q^*\rightarrow\mathbb R
\]
be a discrete Hamiltonian.

A \((+)\)-discrete nonholonomic Hamiltonian system is defined by
\[
\bigl(
X_d,
dH_{d+}(q_0,p_1)
\bigr)
\in
D_{\Delta_\Q}^{d+}.
\]

This condition yields the equations
\begin{equation}\label{eq:dNH}
(q_0,q_1)\in\Delta_\Q^{d+},
\qquad
q_1=D_2H_{d+}(q_0,p_1),
\qquad
p_0-D_1H_{d+}(q_0,p_1)
\in
\Delta_\Q^\circ(q_0).
\end{equation}

Equivalently,
\[
\omega_{d+}^a(q_0,q_1)=0,
\qquad
q_1=D_2H_{d+}(q_0,p_1),
\qquad
p_0-D_1H_{d+}(q_0,p_1)
=
\mu_a\,\omega^a(q_0).
\]

When
\[
\Delta_\Q=T\Q,
\]
the annihilator vanishes and the above equations reduce to the standard right discrete Hamilton equations.

The discrete Dirac formulation provides a unified framework for regular, degenerate, and
constrained discrete systems. It should be stressed, however, that the unconstrained
relations generated by $dL_d$, by $dH_{d+}$, or more generally by a discrete Morse
family are Lagrangian submanifolds of the discrete Tulczyjew space, whereas a general
nonholonomic Lagrange--Dirac or Hamilton--Dirac update is encoded by the induced discrete
Dirac relation and is \emph{not} Lagrangian in $\overline{T^*\Q}\times T^*\Q$ without
further hypotheses --- this is the discrete counterpart of the classical fact that
nonholonomic flows need not be symplectic. The Hamilton--Jacobi theory of the next
section therefore applies directly to the systems of
Section~\ref{sec:discrete-morse-tulczyjew} and to any nonholonomic system whose update
happens to be presented by a Morse family, but a connecting theorem would be required to
cover every system of the present section.
\section{Discrete Hamilton--Jacobi theory}
\label{sec:discrete-HJ}

The geometric structures developed in the previous sections provide a natural framework for formulating a Hamilton--Jacobi theory for implicit discrete Hamiltonian systems. Since the dynamics is described by Lagrangian submanifolds generated by discrete Morse families, the Hamilton--Jacobi equation should also be formulated intrinsically in terms of these geometric objects, rather than in terms of explicit discrete Hamiltonian maps.

The main idea is to replace the integration of the implicit discrete dynamics by the integration of a reduced discrete system on the configuration manifold. This reduction is achieved by means of a closed one-form whose image defines a Lagrangian submanifold of the discrete phase space. The Hamilton--Jacobi equation is then obtained as the compatibility condition between the projected dynamics on the configuration manifold and the Lagrangian submanifold describing the full discrete dynamics.

A discrete Hamilton--Jacobi theory for explicit discrete Hamiltonian maps was
developed by Ohsawa, Bloch and Leok \citep{OhsawaBlochLeok2011}. The formulation
below differs from theirs in two respects: it is stated in the mixed Type--II
variables \((\bq_k,\bp_{k+1})\), so that the one-forms at consecutive steps are
distinguished rather than identified through a single section; and it applies to
Lagrangian relations generated by Morse families, hence to implicit, constrained
and degenerate discrete systems, and to Lagrangian submanifolds that are not
graphs.

We begin by recalling how Type-I and Type-II Morse families generate Lagrangian submanifolds through the left and right wings of the discrete Tulczyjew triple. We then establish the discrete Hamilton--Jacobi theorem and show how the same geometric construction applies without modification to constrained and degenerate discrete systems.

\subsection{Type-I Morse families}

Type-I Morse families provide the geometric description of implicit discrete dynamics in terms of the left wing of the discrete Tulczyjew triple. They constitute the discrete analogue of the generating families associated with Lagrangian submanifolds of $T^{*}TQ$ in the continuous theory. Let $F(\bq_0,\bq_1,\lambda)$
be a Morse family defined on a fiber bundle $\pi:\Pp_d\longrightarrow \Q\times\Q.$

By \eqref{eq:LF} it generates the immersed Lagrangian submanifold
\[
L_F
=
\left\{
\left(\bq_0,\bq_1;
\frac{\partial F}{\partial\bq_0},
\frac{\partial F}{\partial\bq_1}\right)
\;\middle|\;
\frac{\partial F}{\partial\lambda^a}=0
\right\}
\subset
T^{*}(\Q\times\Q).
\]
We write $L_F$ rather than $\operatorname{Im}(dF)$ because $dF$ takes values in
$T^*\Pp_d$, not in the cotangent bundle of the base; the two agree only when no
auxiliary variables are present.

Using the left wing of the discrete Tulczyjew triple,

\[
\alpha_\Q^d:
T^{*}\Q\times T^{*}\Q
\longrightarrow
T^{*}(\Q\times\Q),
\]

we obtain the Lagrangian submanifold

\[
E
=
(\alpha_\Q^d)^{-1}
(L_F)
\subset
T^{*}\Q\times T^{*}\Q.
\]

In local coordinates,

\[
E
=
\left\{
(\bq_0,\bp_0;\bq_1,\bp_1)
\;\middle|\;
\bp_0
=
-
\frac{\partial F}{\partial\bq_0},
\quad
\bp_1
=
\frac{\partial F}{\partial\bq_1},
\quad
\frac{\partial F}{\partial\lambda^a}
=
0
\right\}.
\]

The stationarity conditions with respect to the auxiliary variables determine the implicit constraints defining the dynamics, while the derivatives with respect to the discrete configuration variables determine the associated momenta. Consequently, every regular Type-I Morse family generates a Lagrangian submanifold of the discrete Tulczyjew space and therefore an implicit discrete Hamiltonian system.

\subsection{Type-II Morse families}

Although both Type-I and Type-II Morse families generate Lagrangian submanifolds of the discrete Tulczyjew space, the latter provide the natural framework for describing implicit discrete Hamiltonian systems. Indeed, they are defined on the mixed variables $(\bq_0,\bp_1),$ which constitute the natural coordinates for discrete Hamiltonian dynamics generated by Type-II generating functions. Let $F(\bq_0,\bp_1,\lambda)$ be a Morse family defined on a fiber bundle
$\pi:\Pp_d\longrightarrow \Q\times\Q^*.$ By \eqref{eq:LF} it generates an immersed Lagrangian submanifold $L_F\subset T^*(\Q\times\Q^*)$, again written $L_F$ rather than $\operatorname{Im}(dF)$ because $dF$ takes values in $T^*\Pp_d$.

Using the right wing of the discrete Tulczyjew triple,

\[
\beta_\Q^{d+}:
T^*\Q\times T^*\Q
\longrightarrow
T^*(\Q\times\Q^*),
\]

we obtain the Lagrangian submanifold

\[
E
=
(\beta_\Q^{d+})^{-1}
(L_F)
\subset
T^*\Q\times T^*\Q.
\]

In local coordinates,

\[
E
=
\left\{
(\bq_0,\bp_0;\bq_1,\bp_1)
\;\middle|\;
\bq_1
=
\frac{\partial F}{\partial\bp_1},
\quad
\bp_0
=
\frac{\partial F}{\partial\bq_0},
\quad
\frac{\partial F}{\partial\lambda^a}
=
0
\right\}.
\]

As in the Type-I case, the auxiliary variables encode the implicit constraints defining the dynamics through the stationarity conditions
\[
\frac{\partial F}{\partial\lambda^a}=0.
\]
The remaining equations determine the discrete Hamiltonian evolution in terms of the mixed variables $(\bq_0,\bp_1).$ Since the Hamilton--Jacobi theory developed below is naturally formulated on the mixed phase space $\Q\times\Q^*,$ Type-II Morse families will be used throughout the remainder of this work.

\subsection{Discrete Hamilton--Jacobi theory}

We now formulate the discrete Hamilton--Jacobi theory in a form adapted to
Type--II discrete Hamiltonian relations. The essential point is that a
Type--II generating family depends on the mixed variables
\[
(\bq_k,\bp_{k+1}),
\]
and therefore the momenta at two consecutive steps must be distinguished.

Let
\[
E_k
=
(\beta_{\mathcal Q}^{d+})^{-1}
(L_{F_k})
\subset
T^*\mathcal Q\times T^*\mathcal Q
\]
be the Lagrangian relation generated by a regular Type--II Morse family
\[
F_k(\bq_k,\bp_{k+1},\lambda).
\]
Its local defining equations are
\begin{equation}
\bp_k
=
\frac{\partial F_k}{\partial\bq_k},
\qquad
\bq_{k+1}
=
\frac{\partial F_k}{\partial\bp_{k+1}},
\qquad
\frac{\partial F_k}{\partial\lambda}
=
0.
\label{eq:typeII-relation-HJ}
\end{equation}

Let
\[
\gamma_k:\mathcal Q\longrightarrow T^*\mathcal Q,
\qquad
\gamma_{k+1}:\mathcal Q\longrightarrow T^*\mathcal Q
\]
be closed one-forms. Their images
\[
\Lambda_k=\operatorname{Im}\gamma_k,
\qquad
\Lambda_{k+1}=\operatorname{Im}\gamma_{k+1}
\]
are Lagrangian submanifolds of \(T^*\mathcal Q\).

The discrete Hamilton--Jacobi problem consists of determining
\(\gamma_k\) and \(\gamma_{k+1}\) so that the Lagrangian relation
\(E_k\) maps \(\Lambda_k\) into \(\Lambda_{k+1}\). More precisely, we
require
\begin{equation}
E_k\circ\Lambda_k
\subset
\Lambda_{k+1}.
\label{eq:HJ-Lagrangian-propagation}
\end{equation}

Suppose that a point
\[
(\bq_k,\bp_k;\bq_{k+1},\bp_{k+1})\in E_k
\]
has both of its momenta on the respective sections,
$\bp_k=\gamma_k(\bq_k)$ and $\bp_{k+1}=\gamma_{k+1}(\bq_{k+1})$. Substituting these two
relations into the defining equations \eqref{eq:typeII-relation-HJ} of $E_k$ gives
\begin{equation}
\gamma_k(\bq_k)
=
\frac{\partial F_k}{\partial\bq_k}
\left(
\bq_k,
\gamma_{k+1}(\bq_{k+1}),
\lambda
\right),
\label{eq:HJ-gamma-k}
\end{equation}
\begin{equation}
\bq_{k+1}
=
\frac{\partial F_k}{\partial\bp_{k+1}}
\left(
\bq_k,
\gamma_{k+1}(\bq_{k+1}),
\lambda
\right),
\label{eq:HJ-q-kplus1}
\end{equation}
together with
\begin{equation}
\frac{\partial F_k}{\partial\lambda}
\left(
\bq_k,
\gamma_{k+1}(\bq_{k+1}),
\lambda
\right)
=
0.
\label{eq:HJ-critical}
\end{equation}

These equations are necessary for such a point, and they express the propagation of one
Lagrangian section into the next under the relation generated by the Type--II Morse
family. They are not, however, equivalent to
\eqref{eq:HJ-Lagrangian-propagation}: substituting
$\bp_{k+1}=\gamma_{k+1}(\bq_{k+1})$ before imposing anything presupposes the conclusion,
so \eqref{eq:HJ-gamma-k}--\eqref{eq:HJ-critical} see only those branches of the critical
set on which it already holds. Wherever $\mathcal C_k$ is non-empty, requiring that some branch satisfy them is
strictly weaker than the inclusion, which asks that \emph{every} branch do so. The next theorem
states the equivalence with the correct quantifier.

\begin{theorem}[Discrete Hamilton--Jacobi theorem]
\label{maintheorem}
Let
\[
E_k
=
(\beta_{\mathcal Q}^{d+})^{-1}
(L_{F_k})
\]
be the Lagrangian relation generated by a regular Type--II Morse family
\[
F_k(\bq_k,\bp_{k+1},\lambda),
\]
and let \(\gamma_k\) and \(\gamma_{k+1}\) be closed one-forms on
\(\mathcal Q\). Denote by
\begin{equation}
\mathcal C_k
=
\left\{
(\bq_k,\bp_{k+1},\lambda)\in\Pp_d
\;\middle|\;
\frac{\partial F_k}{\partial\lambda}=0,
\quad
\gamma_k(\bq_k)
=
\frac{\partial F_k}{\partial\bq_k}
(\bq_k,\bp_{k+1},\lambda)
\right\}
\label{eq:HJ-theorem-1}
\end{equation}
the set of critical points of \(F_k\) whose incoming momentum lies on
\(\operatorname{Im}\gamma_k\).

Then
\[
E_k\circ\operatorname{Im}\gamma_k
\subset
\operatorname{Im}\gamma_{k+1}
\]
if and only if, for \emph{every} \((\bq_k,\bp_{k+1},\lambda)\in\mathcal C_k\),
\begin{equation}
\bp_{k+1}
=
\gamma_{k+1}
\!\left(
\frac{\partial F_k}{\partial\bp_{k+1}}
(\bq_k,\bp_{k+1},\lambda)
\right).
\label{eq:HJ-theorem-2}
\end{equation}
\end{theorem}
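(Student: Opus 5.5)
The plan is to unwind both sides of the equivalence into statements about points of $\Pp_d$ and to compare them pointwise. First I will fix the meaning of the composition of a relation with a submanifold: $E_k\circ\Lambda_k$ is the set of all $(\bq_{k+1},\bp_{k+1})\in T^*\Q$ for which there exists $(\bq_k,\bp_k)\in\Lambda_k$ with $(\bq_k,\bp_k;\bq_{k+1},\bp_{k+1})\in E_k$. Next I will use the local description \eqref{eq:typeII-relation-HJ} of $E_k=(\beta_\Q^{d+})^{-1}(L_{F_k})$, which follows from \eqref{eq:LF} and the coordinate form of $\beta_\Q^{d+}$. It shows that a point of $E_k$ is exactly the image of some $z=(\bq_k,\bp_{k+1},\lambda)\in\Cr(F_k,\pi_d)$ under the map
\[
\kappa_k(z)=\Bigl(\bq_k,\tfrac{\partial F_k}{\partial\bq_k}(z);\ \tfrac{\partial F_k}{\partial\bp_{k+1}}(z),\bp_{k+1}\Bigr).
\]
This parametrisation is surjective onto $E_k$ by definition, and this is the only property of $\kappa_k$ I need. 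It is not necessary for $\kappa_k$ to be injective or an immersion, so the regularity hypothesis enters only to guarantee that $E_k$ is the Lagrangian relation under consideration.

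With this in hand, the condition $(\bq_k,\bp_k)\in\operatorname{Im}\gamma_k$ reads $\bp_k=\gamma_k(\bq_k)$. I will then show that
\[
E_k\circ\operatorname{Im}\gamma_k=\Bigl\{\bigl(\tfrac{\partial F_k}{\partial\bp_{k+1}}(z),\bp_{k+1}\bigr)\;\Big|\;z=(\bq_k,\bp_{k+1},\lambda)\in\mathcal C_k\Bigr\}.
\]
For the inclusion $\supseteq$, given $z\in\mathcal C_k$ the point $\kappa_k(z)$ lies in $E_k$ and its first factor lies in $\operatorname{Im}\gamma_k$. For the inclusion $\subseteq$, any point of the composition comes from some element of $E_k$ with first factor on $\operatorname{Im}\gamma_k$. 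Choosing a preimage $z$ under $\kappa_k$, the defining equations of $\mathcal C_k$ then hold at $z$.

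The theorem then follows directly, since $\operatorname{Im}\gamma_{k+1}=\{(\bq,\bp)\mid\bp=\gamma_{k+1}(\bq)\}$. The inclusion $E_k\circ\operatorname{Im}\gamma_k\subset\operatorname{Im}\gamma_{k+1}$ says that every element of the displayed set satisfies $\bp_{k+1}=\gamma_{k+1}(\bq_{k+1})$ with $\bq_{k+1}=\partial F_k/\partial\bp_{k+1}(z)$. Because the set is parametrised by all of $\mathcal C_k$, this is exactly \eqref{eq:HJ-theorem-2} for every $z\in\mathcal C_k$. Both directions come from the same set equality, and when $\mathcal C_k=\emptyset$ both sides hold vacuously.

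The main point needing care, rather than a genuine obstacle, is the second inclusion above. A single point of $E_k$ may have several preimages $z$ in the critical set, corresponding to different values of $\lambda$ or different branches, because $\kappa_k$ is only an immersion. I must make sure that the universal quantifier over $\mathcal C_k$ is matched. The resolution is that $\gamma_k(\bq_k)=\partial F_k/\partial\bq_k(z)$ depends only on $\kappa_k(z)$, so every preimage of a point in the composition lies in $\mathcal C_k$. Conversely, every $z\in\mathcal C_k$ produces a point of the composition. The quantifier over $\mathcal C_k$ and the quantifier over points of $E_k\circ\operatorname{Im}\gamma_k$ therefore range over the same data. The closedness of $\gamma_k$ and $\gamma_{k+1}$ plays no role in the equivalence beyond ensuring that their images are Lagrangian.
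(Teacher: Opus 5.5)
Your proposal is correct and follows essentially the same route as the paper. Both identify $E_k\circ\operatorname{Im}\gamma_k$ with the image of $\mathcal C_k$ under $z\mapsto(\partial F_k/\partial\bp_{k+1}(z),\bp_{k+1})$ and then read off the inclusion pointwise over $\mathcal C_k$; your remark that membership in $\mathcal C_k$ depends only on $\kappa_k(z)$, so that several preimages of one point do not disturb the universal quantifier, makes explicit what the paper's proof uses tacitly.
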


\begin{remark}
The quantifier matters. When the critical set
\(\partial F_k/\partial\lambda=0\) has several branches over a given point of
\(\Q\times\Q^*\) --- which is the situation the Morse-family formulation is
designed for --- one branch may satisfy \eqref{eq:HJ-theorem-2} while another
carries the propagated point off \(\operatorname{Im}\gamma_{k+1}\). It is the
condition holding on all of \(\mathcal C_k\), not on some element of it, that is
equivalent to the inclusion. Note also that the propagated base point is now read off
directly from a point of \(\mathcal C_k\) as
\(\bq_{k+1}=\partial F_k/\partial\bp_{k+1}(\bq_k,\bp_{k+1},\lambda)\),
rather than being defined implicitly through itself as in
\eqref{eq:HJ-q-kplus1}.
\end{remark}

\begin{proof}
By definition of the Lagrangian relation generated by the Type--II
Morse family,
\[
(\bq_k,\bp_k;\bq_{k+1},\bp_{k+1})\in E_k
\]
if and only if
\[
\bp_k
=
\frac{\partial F_k}{\partial\bq_k},
\qquad
\bq_{k+1}
=
\frac{\partial F_k}{\partial\bp_{k+1}},
\qquad
\frac{\partial F_k}{\partial\lambda}
=
0,
\]
all three evaluated at \((\bq_k,\bp_{k+1},\lambda)\). Such a point has its
incoming momentum on \(\operatorname{Im}\gamma_k\) precisely when
\(\gamma_k(\bq_k)=\partial F_k/\partial\bq_k\), that is, precisely when
\((\bq_k,\bp_{k+1},\lambda)\in\mathcal C_k\). Hence the composition is the set
\[
E_k\circ\operatorname{Im}\gamma_k
=
\left\{
\left(
\frac{\partial F_k}{\partial\bp_{k+1}}
(\bq_k,\bp_{k+1},\lambda),
\;
\bp_{k+1}
\right)
\;\middle|\;
(\bq_k,\bp_{k+1},\lambda)\in\mathcal C_k
\right\},
\]
parametrised by \(\mathcal C_k\).

An element of this set lies in \(\operatorname{Im}\gamma_{k+1}\) if and only if
its momentum agrees with the value of \(\gamma_{k+1}\) at its base point
\(\bq_{k+1}=\partial F_k/\partial\bp_{k+1}\), which is exactly
\eqref{eq:HJ-theorem-2} for that element. The inclusion
\(E_k\circ\operatorname{Im}\gamma_k\subset\operatorname{Im}\gamma_{k+1}\)
therefore holds if and only if \eqref{eq:HJ-theorem-2} holds at every point of
\(\mathcal C_k\).
\end{proof}

\noindent
\textbf{Exact discrete one-forms.}
Suppose now that
\[
\gamma_k=dW_k,
\qquad
\gamma_{k+1}=dW_{k+1},
\]
for smooth functions
\[
W_k,W_{k+1}:\mathcal Q\longrightarrow\mathbb R.
\]
The set $\mathcal C_k$ of \eqref{eq:HJ-theorem-1} then reads
\[
\mathcal C_k
=
\left\{
(\bq_k,\bp_{k+1},\lambda)
\;\middle|\;
\frac{\partial F_k}{\partial\lambda}=0,
\quad
\frac{\partial W_k}{\partial\bq_k}(\bq_k)
=
\frac{\partial F_k}{\partial\bq_k}(\bq_k,\bp_{k+1},\lambda)
\right\},
\]
and Theorem~\ref{maintheorem} states that
$E_k\circ\operatorname{Im}dW_k\subset\operatorname{Im}dW_{k+1}$ if and only if
\begin{equation}
\bp_{k+1}
=
\frac{\partial W_{k+1}}{\partial\bq_{k+1}}
\!\left(
\frac{\partial F_k}{\partial\bp_{k+1}}
(\bq_k,\bp_{k+1},\lambda)
\right)
\qquad
\text{for every }(\bq_k,\bp_{k+1},\lambda)\in\mathcal C_k .
\label{eq:exact-HJ-1}
\end{equation}
This is the discrete Hamilton--Jacobi equation relating $W_k$ and $W_{k+1}$. Along a
single branch, that is, after the substitution
$\bp_{k+1}=\partial W_{k+1}/\partial\bq_{k+1}$, it takes the more familiar shape
\begin{equation}
\bq_{k+1}
=
\frac{\partial F_k}{\partial\bp_{k+1}}
\left(
\bq_k,
\frac{\partial W_{k+1}}{\partial\bq_{k+1}},
\lambda
\right),
\label{eq:exact-HJ-2}
\end{equation}
together with
\begin{equation}
\frac{\partial F_k}{\partial\lambda}
\left(
\bq_k,
\frac{\partial W_{k+1}}{\partial\bq_{k+1}},
\lambda
\right)
=
0.
\label{eq:exact-HJ-critical}
\end{equation}

As before, this branchwise form is weaker than \eqref{eq:exact-HJ-1} as soon as the
critical set has more than one branch over a given point, and it is
\eqref{eq:exact-HJ-1} that is equivalent to the propagation property.

In contrast with a stationary
formulation, the Type--II structure naturally distinguishes the
incoming momentum
\[
\bp_k=dW_k(\bq_k)
\]
from the outgoing momentum
\[
\bp_{k+1}=dW_{k+1}(\bq_{k+1}).
\]

This distinction becomes essential when the projection of the
Lagrangian submanifold onto configuration space ceases to be regular.
In that case a single function \(W_k\) may no longer describe the
complete Lagrangian submanifold, whereas a Morse family continues to
provide a regular generating description.
\subsection{Propagation of Morse families}

The previous formulation assumes that the Lagrangian submanifold at
each step can locally be represented as the image of a closed one-form.
This representation may fail when the projection onto configuration
space develops singularities. In that situation the natural object is
a Morse family.

Let
\[
S_k(\bq_k,a)
\]
be a Morse family generating a Lagrangian submanifold
\[
\Lambda_k
=
\left\{
\left(
\bq_k,
\frac{\partial S_k}{\partial\bq_k}
\right)
\; ;\;
\frac{\partial S_k}{\partial a}=0
\right\}
\subset T^*\mathcal Q.
\]

Let the discrete dynamics from step \(k\) to \(k+1\) be generated by the
Type--II Morse family
\[
F_k(\bq_k,\bp_{k+1},\lambda).
\]

Define the composition family
\begin{equation}
\mathcal S_{k+1}
(\bq_{k+1};
\bq_k,\bp_{k+1},a,\lambda)
=
S_k(\bq_k,a)
+
\langle\bp_{k+1},\bq_{k+1}\rangle
-
F_k(\bq_k,\bp_{k+1},\lambda).
\label{eq:general-composition-family}
\end{equation}

The variables
\[
(\bq_k,\bp_{k+1},a,\lambda)
\]
are regarded as auxiliary variables, whereas \(\bq_{k+1}\) is the base
variable of the propagated Lagrangian submanifold.

The criticality equations are
\begin{align}
0
&=
\frac{\partial\mathcal S_{k+1}}{\partial\bq_k}
=
\frac{\partial S_k}{\partial\bq_k}
-
\frac{\partial F_k}{\partial\bq_k},
\label{eq:composition-q}
\\
0
&=
\frac{\partial\mathcal S_{k+1}}{\partial\bp_{k+1}}
=
\bq_{k+1}
-
\frac{\partial F_k}{\partial\bp_{k+1}},
\label{eq:composition-p}
\\
0
&=
\frac{\partial\mathcal S_{k+1}}{\partial a}
=
\frac{\partial S_k}{\partial a},
\label{eq:composition-a}
\\
0
&=
\frac{\partial\mathcal S_{k+1}}{\partial\lambda}
=
-
\frac{\partial F_k}{\partial\lambda}.
\label{eq:composition-lambda}
\end{align}

Since
\[
\bp_k
=
\frac{\partial S_k}{\partial\bq_k},
\]
the first two equations recover precisely the Type--II discrete
Hamilton equations
\[
\bp_k
=
\frac{\partial F_k}{\partial\bq_k},
\qquad
\bq_{k+1}
=
\frac{\partial F_k}{\partial\bp_{k+1}},
\]
while the remaining equations retain the critical sets of both Morse
families.

Moreover,
\[
\bp_{k+1}
=
\frac{\partial\mathcal S_{k+1}}
{\partial\bq_{k+1}}.
\]
Consequently, the propagated Lagrangian submanifold is
\begin{equation}
\Lambda_{k+1}
=
\left\{
\left(
\bq_{k+1},
\frac{\partial\mathcal S_{k+1}}
{\partial\bq_{k+1}}
\right)
\; ;\;
D_{(\bq_k,\bp_{k+1},a,\lambda)}
\mathcal S_{k+1}
=
0
\right\}.
\label{eq:propagated-general-Lagrangian}
\end{equation}

This construction propagates the complete Lagrangian submanifold
without requiring it to be represented as the graph of a single
one-form. When the projection onto configuration space is regular,
the auxiliary variables $a$ may locally be eliminated and the usual
Hamilton--Jacobi description is recovered. When the projection becomes
singular, the Morse-family representation remains valid.

Two points must be settled before \eqref{eq:general-composition-family} can be
regarded as defining an integrator. First, \eqref{eq:propagated-general-Lagrangian}
describes a Lagrangian submanifold only if $\mathcal S_{k+1}$ is itself a Morse
family, which does not follow from $S_k$ and $F_k$ being Morse families; this is
settled in Section~\ref{sec:composition-morse}. Second, the composition
introduces $2n$ new auxiliary variables $(\bq_k,\bp_{k+1})$ at every step, so
iterating \eqref{eq:general-composition-family} without further argument produces a
family whose auxiliary dimension grows by $2n+m_\lambda$ at each step;
Section~\ref{sec:reduction} shows that the $2n$ phase-space variables can be eliminated
again at each step, leaving a family with the auxiliary variables of $S_k$ and $F_k$
only, so that the growth is reduced to $m_\lambda$ per step and disappears entirely when
$m_\lambda=0$.

\subsection{Regularity of the composition family}
\label{sec:composition-morse}

We work in local coordinates, writing $n=\dim\mathcal Q$ and letting $m_a$ and
$m_\lambda$ denote the numbers of auxiliary variables of $S_k$ and of $F_k$. Second
derivatives are abbreviated by their subscripts, all evaluated at the point under
consideration:
\[
S_{qq}=\frac{\partial^2S_k}{\partial\bq_k\,\partial\bq_k},
\quad
S_{qa}=\frac{\partial^2S_k}{\partial\bq_k\,\partial a},
\quad
S_{aa}=\frac{\partial^2S_k}{\partial a\,\partial a},
\]
\[
F_{qq}=\frac{\partial^2F_k}{\partial\bq_k\,\partial\bq_k},
\quad
F_{qp}=\frac{\partial^2F_k}{\partial\bq_k\,\partial\bp_{k+1}},
\quad
F_{pp}=\frac{\partial^2F_k}{\partial\bp_{k+1}\,\partial\bp_{k+1}},
\]
\[
F_{q\lambda}=\frac{\partial^2F_k}{\partial\bq_k\,\partial\lambda},
\quad
F_{p\lambda}=\frac{\partial^2F_k}{\partial\bp_{k+1}\,\partial\lambda},
\quad
F_{\lambda\lambda}=\frac{\partial^2F_k}{\partial\lambda\,\partial\lambda},
\]
with $S_{aq}=S_{qa}^{\top}$, $F_{pq}=F_{qp}^{\top}$, $F_{\lambda q}=F_{q\lambda}^{\top}$
and $F_{\lambda p}=F_{p\lambda}^{\top}$. Note that $F_{qp}$ is an $n\times n$ matrix
that need not be symmetric.

\begin{proposition}
\label{prop:composition}
Let $S_k(\bq_k,a)$ be a Morse family generating $\Lambda_k\subset T^*\mathcal Q$ and
let $F_k(\bq_k,\bp_{k+1},\lambda)$ be a Type--II Morse family generating the
Lagrangian relation $E_k$, and let $\mathcal S_{k+1}$ be the composition family
\eqref{eq:general-composition-family}. Then:
\begin{enumerate}
\item[(i)] the set described by \eqref{eq:propagated-general-Lagrangian} is exactly
$E_k\circ\Lambda_k$, irrespective of any regularity assumption;
\item[(ii)] $\mathcal S_{k+1}$ is a Morse family at a critical point if and only if
the $(n+m_a+m_\lambda)\times(2n+m_a+m_\lambda)$ matrix
\begin{equation}
K
=
\begin{pmatrix}
S_{qq}-F_{qq} & -F_{qp} & S_{qa} & -F_{q\lambda}\\
S_{aq} & 0 & S_{aa} & 0\\
-F_{\lambda q} & -F_{\lambda p} & 0 & -F_{\lambda\lambda}
\end{pmatrix},
\label{eq:composition-criterion}
\end{equation}
obtained from the matrix $M$ displayed in the proof below by deleting the
$\bp_{k+1}$ row of blocks and the $\bq_{k+1}$ column of blocks, has full row rank
there;
\item[(iii)] a convenient sufficient condition for (ii) is that
\begin{equation}
F_{qp}\ \text{be invertible}
\qquad\text{and}\qquad
N:=F_{\lambda p}\,F_{qp}^{-1}\,F_{q\lambda}-F_{\lambda\lambda}
\ \text{be invertible}.
\label{eq:composition-hypotheses}
\end{equation}
\end{enumerate}
When (ii) holds, $E_k\circ\Lambda_k$ is an immersed Lagrangian submanifold of
$T^*\mathcal Q$ of dimension $n$, generated by $\mathcal S_{k+1}$.

The hypothesis that $S_k$ be a Morse family is necessary as well as assumed: if
$\bigl(S_{aq}\ \ S_{aa}\bigr)$ fails to have full row rank at a critical point, then
neither does $K$. The conditions \eqref{eq:composition-hypotheses}, by contrast, are
sufficient but not necessary, as the proof makes clear. When $F_k$ carries no
auxiliary variables, $m_\lambda=0$, the second condition in
\eqref{eq:composition-hypotheses} is vacuous.
\end{proposition}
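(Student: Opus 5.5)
The plan is to work entirely in the local coordinates of the statement. For (i) we unwind definitions, for (ii) we write out the Jacobian from the Morse-family criterion \eqref{eq:disc-gen-rank} and eliminate blocks, and for (iii) we compute a left kernel.

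\emph{Step (i).} A point $(\bq_{k+1},\bp)$ belongs to the set \eqref{eq:propagated-general-Lagrangian} exactly when there exist $(\bq_k,\bp_{k+1},a,\lambda)$ satisfying \eqref{eq:composition-q}--\eqref{eq:composition-lambda} with $\bp=\partial\mathcal S_{k+1}/\partial\bq_{k+1}=\bp_{k+1}$. Equations \eqref{eq:composition-a} and \eqref{eq:composition-q} say that $(\bq_k,\bp_k)\in\Lambda_k$, where $\bp_k:=\partial S_k/\partial\bq_k=\partial F_k/\partial\bq_k$. Equations \eqref{eq:composition-p} and \eqref{eq:composition-lambda} complete the defining equations of $E_k$. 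Conversely, any chain $(\bq_k,\bp_k)\in\Lambda_k$, $(\bq_k,\bp_k;\bq_{k+1},\bp_{k+1})\in E_k$ supplies witnesses $a$ and $\lambda$ that solve the system. Hence the set equals $E_k\circ\Lambda_k$. No rank hypothesis enters, since this is pure set equality.

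\emph{Step (ii).} By \eqref{eq:disc-gen-rank}, $\mathcal S_{k+1}$ is Morse at a critical point iff the Jacobian $M$ of $(\partial_{\bq_k},\partial_{\bp_{k+1}},\partial_a,\partial_\lambda)\mathcal S_{k+1}$ with respect to all variables $(\bq_{k+1},\bq_k,\bp_{k+1},a,\lambda)$ has full row rank $2n+m_a+m_\lambda$. Only $\partial\mathcal S_{k+1}/\partial\bp_{k+1}$ depends on $\bq_{k+1}$, and its derivative there is the identity. So the $\bq_{k+1}$ column of blocks of $M$ is $(0,I,0,0)^\top$. The $\bp_{k+1}$ row is independent of all the others, and by block elimination $\rank M=n+\rank K$. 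Thus $M$ has full row rank iff $K$ has full row rank $n+m_a+m_\lambda$. Once this holds, the general discussion around \eqref{eq:LF} gives that the critical set has dimension $n$ and that $E_k\circ\Lambda_k$ is an immersed Lagrangian submanifold generated by $\mathcal S_{k+1}$.

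\emph{Step (iii) and the remarks.} Let $(x,y,z)$ be a left null vector of $K$. The $\bp_{k+1}$ column gives $xF_{qp}+zF_{\lambda p}=0$, so $x=-zF_{\lambda p}F_{qp}^{-1}$. Substituting into the $\lambda$ column, $-xF_{q\lambda}-zF_{\lambda\lambda}=0$, yields $zN=0$, hence $z=0$ and then $x=0$. The remaining columns give $y\,(S_{aq}\ \ S_{aa})=0$, so $y=0$ because $S_k$ is Morse. Therefore $K$ has trivial left kernel. For necessity of the Morse property of $S_k$: a nonzero $y$ with $y(S_{aq}\ \ S_{aa})=0$ gives the left null vector $(0,y,0)$ of $K$. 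That (iii) is not necessary is visible from the argument, since only the rank of $K$ matters. When $m_\lambda=0$ the $z$-block is absent and $N$ never appears.

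The main obstacle is Step (ii): it requires bookkeeping carefully that the Morse condition includes the base column $\bq_{k+1}$, since that column is what makes the elimination work. The rest is linear algebra.
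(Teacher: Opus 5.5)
Your arguments for (i), (ii), (iii), and for the necessity of the Morse property of $S_k$, are correct and are essentially the paper's proof. For (ii) the paper states the same block observation through left kernels: the identity block in the $\bq_{k+1}$ column forces $c_p=0$ in $c^{\top}M=0$, which leaves exactly $(c_q,c_a,c_\lambda)^{\top}K=0$. That is equivalent to your identity $\rank M=n+\rank K$.

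The one assertion you have not actually proved is that \eqref{eq:composition-hypotheses} is \emph{not necessary}. Your justification, that only the rank of $K$ matters, does not establish this. A hypothesis used in a sufficiency argument may well also be necessary, and your own treatment of $\bigl(S_{aq}\ \ S_{aa}\bigr)$ is an instance of exactly that. Non-necessity requires exhibiting a critical point at which $K$ has full row rank while \eqref{eq:composition-hypotheses} fails.

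The paper does this with $m_\lambda=0$ and $F_{qp}=0$. The $\bp_{k+1}$ column of $K$ then vanishes, so full row rank of $K$ reduces to nonsingularity of the square symmetric matrix $\left(\begin{smallmatrix}S_{qq}-F_{qq}&S_{qa}\\ S_{aq}&S_{aa}\end{smallmatrix}\right)$, which holds for generic data. For instance, $n=1$, $m_a=0$, $S_k=\tfrac12\bq_k^2$, $F_k=g(\bp_{k+1})$ gives $K=(1\ \ 0)$, of full row rank although $F_{qp}=0$.

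Invertibility of $N$ is not necessary either, even when $F_{qp}$ is invertible. Take $n=1$, $m_a=0$, $S_k=\tfrac12\bq_k^2$ and $F_k=\bq_k\bp_{k+1}+\lambda\bp_{k+1}$. Then $F_{qp}=1$ and $N=0$, yet $K=\left(\begin{smallmatrix}1&-1&0\\0&-1&0\end{smallmatrix}\right)$ has rank two. Adding such an example closes the gap.
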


\begin{proof}
(i) On the critical set, \eqref{eq:composition-a} says $\partial S_k/\partial a=0$,
so, writing $\bp_k=\partial S_k/\partial\bq_k$, the pair $(\bq_k,\bp_k)$ lies on
$\Lambda_k$; \eqref{eq:composition-q} then reads $\bp_k=\partial F_k/\partial\bq_k$,
\eqref{eq:composition-p} reads $\bq_{k+1}=\partial F_k/\partial\bp_{k+1}$ and
\eqref{eq:composition-lambda} reads $\partial F_k/\partial\lambda=0$, which together
are exactly the conditions for $\bigl((\bq_k,\bp_k),(\bq_{k+1},\bp_{k+1})\bigr)\in E_k$.
Since $\bp_{k+1}=\partial\mathcal S_{k+1}/\partial\bq_{k+1}$, the point recorded in
\eqref{eq:propagated-general-Lagrangian} is the image of $(\bq_k,\bp_k)$ under $E_k$.
Reading the same equivalences in the opposite order gives the reverse inclusion.

(ii) Write $u=(\bq_k,\bp_{k+1},a,\lambda)$ for the auxiliary variables and
$\bq_{k+1}$ for the base variable. By definition $\mathcal S_{k+1}$ is a Morse
family precisely when $0$ is a regular value of $u\mapsto\partial\mathcal S_{k+1}/\partial u$,
that is, when the $(2n+m_a+m_\lambda)\times(3n+m_a+m_\lambda)$ matrix
\[
M
=
\left(
\frac{\partial^2\mathcal S_{k+1}}{\partial u\,\partial u}
\quad
\frac{\partial^2\mathcal S_{k+1}}{\partial u\,\partial\bq_{k+1}}
\right)
\]
has full row rank. Differentiating
\eqref{eq:composition-q}--\eqref{eq:composition-lambda} and
$\partial\mathcal S_{k+1}/\partial\bq_{k+1}=\bp_{k+1}$, and ordering the columns as
$(\bq_k,\bp_{k+1},a,\lambda,\bq_{k+1})$,
\[
M
=
\begin{pmatrix}
S_{qq}-F_{qq} & -F_{qp} & S_{qa} & -F_{q\lambda} & 0\\
-F_{pq} & -F_{pp} & 0 & -F_{p\lambda} & I_n\\
S_{aq} & 0 & S_{aa} & 0 & 0\\
-F_{\lambda q} & -F_{\lambda p} & 0 & -F_{\lambda\lambda} & 0
\end{pmatrix}.
\]
Let $c=(c_q,c_p,c_a,c_\lambda)$ satisfy $c^{\top}M=0$. The last column of blocks gives
$c_p^{\top}I_n=0$, hence $c_p=0$; and once $c_p=0$ the row of $M$ indexed by
$\bp_{k+1}$ contributes nothing further, while the last column of blocks is satisfied
identically. The condition $c^{\top}M=0$ is therefore equivalent to
$(c_q,c_a,c_\lambda)^{\top}K=0$ for the matrix $K$ of
\eqref{eq:composition-criterion}. Since $M$ has $2n+m_a+m_\lambda$ rows, $K$ has
$n+m_a+m_\lambda$ rows, and the two left kernels are isomorphic, one has full row rank
if and only if the other does. This proves (ii). Written out, the four columns of $K$
give
\begin{align}
c_q^{\top}(S_{qq}-F_{qq})+c_a^{\top}S_{aq}-c_\lambda^{\top}F_{\lambda q}&=0,
\label{eq:rank-1}\\
c_q^{\top}F_{qp}+c_\lambda^{\top}F_{\lambda p}&=0,
\label{eq:rank-2}\\
c_q^{\top}S_{qa}+c_a^{\top}S_{aa}&=0,
\label{eq:rank-3}\\
c_q^{\top}F_{q\lambda}+c_\lambda^{\top}F_{\lambda\lambda}&=0.
\label{eq:rank-4}
\end{align}

(iii) Assume \eqref{eq:composition-hypotheses}. Because $F_{qp}$ is invertible,
\eqref{eq:rank-2} gives $c_q^{\top}=-c_\lambda^{\top}F_{\lambda p}F_{qp}^{-1}$, and
substituting this into \eqref{eq:rank-4} yields
\[
c_\lambda^{\top}
\bigl(F_{\lambda p}F_{qp}^{-1}F_{q\lambda}-F_{\lambda\lambda}\bigr)
=
c_\lambda^{\top}N
=
0 ,
\]
so $c_\lambda=0$ because $N$ is invertible, and therefore $c_q=0$ as well. With
$c_q=0$, equations \eqref{eq:rank-1} and \eqref{eq:rank-3} become
$c_a^{\top}S_{aq}=0$ and $c_a^{\top}S_{aa}=0$, that is
$c_a^{\top}\bigl(S_{aq}\ \ S_{aa}\bigr)=0$; since $S_k$ is a Morse family this matrix
has full row rank $m_a$, whence $c_a=0$. Thus $c=0$ and $K$ has full row rank.

Whenever (ii) holds, the critical set of $\mathcal S_{k+1}$ is a submanifold of
dimension $n$, and by the construction of
Section~\ref{subsec:morse-families-continuous} the family generates an immersed
Lagrangian submanifold of $T^*\mathcal Q$, which by (i) is $E_k\circ\Lambda_k$.

For the two remaining assertions, note first that if $c_a\neq0$ satisfies
$c_a^{\top}\bigl(S_{aq}\ \ S_{aa}\bigr)=0$ then $(c_q,c_a,c_\lambda)=(0,c_a,0)$ solves
\eqref{eq:rank-1}--\eqref{eq:rank-4}, so $K$ is row rank deficient and $S_k$ being a
Morse family is indeed necessary. On the other hand \eqref{eq:composition-hypotheses}
is not: if $F_{qp}=0$ and $m_\lambda=0$, then \eqref{eq:rank-2} is vacuous and
\eqref{eq:rank-1} together with \eqref{eq:rank-3} becomes the square symmetric system
\[
(c_q^{\top}\ \ c_a^{\top})
\begin{pmatrix}
S_{qq}-F_{qq} & S_{qa}\\
S_{aq} & S_{aa}
\end{pmatrix}
=0 ,
\]
which for generic data has only the solution $c_q=0$, $c_a=0$; in that case $K$ has
full row rank although $F_{qp}$ is singular. Neither is the invertibility of $N$
necessary; \eqref{eq:composition-hypotheses} is a convenient sufficient condition
rather than a characterisation, and \eqref{eq:composition-criterion} is the sharp
statement.
\end{proof}

\begin{remark}
\label{rem:second-class}
The matrix $N$ takes a concrete form in the constrained case. For
\[
F_k(\bq_k,\bp_{k+1},\lambda)
=
H_{d+}(\bq_k,\bp_{k+1})+\lambda_\alpha\Phi^\alpha(\bq_k,\bp_{k+1}),
\]
one has $F_{\lambda\lambda}=0$, so
\[
N
=
\frac{\partial\Phi}{\partial\bp_{k+1}}
\,F_{qp}^{-1}\,
\left(\frac{\partial\Phi}{\partial\bq_k}\right)^{\!\top} ,
\]
an $m_\lambda\times m_\lambda$ matrix. Its invertibility is exactly the condition that
the criticality equations determine the multipliers $\lambda$ uniquely in terms of the
remaining variables; we emphasise that $N$ is not in general antisymmetric and is not
the Dirac matrix $\Phi_q\Phi_p^{\top}-\Phi_p\Phi_q^{\top}$ of the constraint algebra,
so no dictionary with the first- or second-class classification should be read into
it. The condition is insensitive to how the constraints are normalised: replacing
$\Phi$ by $D\Phi$ for an invertible constant matrix $D$, and correspondingly
$\lambda$ by $D^{-\top}\lambda$, leaves $F_k$ unchanged as a function and replaces $N$
by $DND^{\top}$.

The condition on $N$ cannot be dropped from \eqref{eq:composition-hypotheses}, even
when $F_{qp}$ is invertible and both $S_k$ and $F_k$ are Morse families. Taking
$n=m_a=m_\lambda=1$ with
\[
S_k(q,a)=\tfrac12(q+a)^2,
\qquad
F_k(q,p,\lambda)=qp+\lambda p,
\]
both $S_k$ and $F_k$ are Morse families and $F_{qp}=1$ is invertible, but
$N=F_{\lambda p}F_{qp}^{-1}F_{q\lambda}-F_{\lambda\lambda}=0$; the matrix $M$ then has
rank $3$ rather than $4$, its left kernel being spanned by
$(c_q,c_p,c_a,c_\lambda)=(-1,0,1,1)$, and $\mathcal S_{k+1}$ is not a Morse family.
The mechanism is that $\Phi=p$ does not depend on $\bq_k$, so
$\partial\Phi/\partial\bq_k=0$ and $N$ vanishes identically. Note that what fails here
is only the Morse property of the composition family: by
Proposition~\ref{prop:composition}(i) the set it describes is still $E_k\circ\Lambda_k$,
which in this example is the zero section and is perfectly regular. Failure of
\eqref{eq:composition-criterion} obstructs the \emph{generating-family} description of
$E_k\circ\Lambda_k$, not the composition itself.
\end{remark}

\subsection{Reduction of the auxiliary variables}
\label{sec:reduction}

Proposition~\ref{prop:composition} makes $\mathcal S_{k+1}$ a legitimate generating
family, but it is a family in $2n+m_a+m_\lambda$ auxiliary variables, whereas $S_k$
had only $m_a$. Iterating would therefore produce, after $N$ steps, a family with
$2nN+m_a+Nm_\lambda$ auxiliary variables. We now show that the $2n$ variables
introduced by the composition can be removed again at each step, so that the
$2n$ internal phase-space variables per step do not accumulate. Note that the
reduction eliminates only those $2n$ variables: the multipliers $\lambda$ of the step
survive alongside the incoming $a$, so one step turns a family with $m_a$ auxiliary
variables into one with $m_a+m_\lambda$, and $N$ steps give $m_a+Nm_\lambda$. The auxiliary count is genuinely fixed only when $m_\lambda=0$, which
is the case in the application of Section~\ref{sec:optical-application}; in general a
separate argument would be needed to eliminate the newly introduced multipliers as
well.

Assume that $F_k$ is consistent with a continuous Hamiltonian in the sense that
\begin{equation}
F_k(\bq_k,\bp_{k+1},\lambda)
=
\langle\bp_{k+1},\bq_k\rangle
+
h\,G_k(\bq_k,\bp_{k+1},\lambda),
\label{eq:consistent-family}
\end{equation}
which is the form taken by every Type--II discrete Hamiltonian that reduces to the
identity relation as $h\to0$; for the first-order choice one has $G_k=H(z_k,\cdot,\cdot)$.
The constrained families of Remark~\ref{rem:second-class} are of this form after the
multipliers are rescaled: writing $\lambda=h\mu$ turns
$H_{d+}+\lambda_\alpha\Phi^\alpha$ into
$\langle\bp_{k+1},\bq_k\rangle+h(H+\mu_\alpha\Phi^\alpha)$, and since the rescaling
replaces $N$ by $h^2N$ it does not affect the hypotheses of
Proposition~\ref{prop:composition}.
Write
\begin{equation}
B
=
\begin{pmatrix}
\dfrac{\partial^2\mathcal S_{k+1}}{\partial\bq_k\,\partial\bq_k}
&
\dfrac{\partial^2\mathcal S_{k+1}}{\partial\bq_k\,\partial\bp_{k+1}}
\\[2mm]
\dfrac{\partial^2\mathcal S_{k+1}}{\partial\bp_{k+1}\,\partial\bq_k}
&
\dfrac{\partial^2\mathcal S_{k+1}}{\partial\bp_{k+1}\,\partial\bp_{k+1}}
\end{pmatrix}
=
\begin{pmatrix}
S_{qq}-hG_{qq} & -(I_n+hG_{qp})\\
-(I_n+hG_{pq}) & -hG_{pp}
\end{pmatrix}.
\label{eq:B-block}
\end{equation}

\begin{proposition}
\label{prop:reduction}
Suppose that, on the set of arguments $(\bq_k,\bp_{k+1},a,\lambda)$ under
consideration,
\[
\left\|
D^2_{(\bq_k,\bp_{k+1})}G_k
\right\|
\le C_G
\qquad\text{and}\qquad
\left\|S_{qq}\right\|\le C_S ,
\]
and set $h_0=\bigl[C_G(1+C_S)\bigr]^{-1}$. Then for every $0<h<h_0$:
\begin{enumerate}
\item[(i)] $B$ is invertible there, with
$\det B=(-1)^n+\mathcal O(h)$ and
$\|B^{-1}\|\le(1+C_S)\bigl[1-hC_G(1+C_S)\bigr]^{-1}$;
\item[(ii)] near each critical point of $\mathcal S_{k+1}$ the equations
$\partial\mathcal S_{k+1}/\partial\bq_k=0$ and
$\partial\mathcal S_{k+1}/\partial\bp_{k+1}=0$ determine
$(\bq_k,\bp_{k+1})=\bigl(\hat\bq_k,\hat\bp_{k+1}\bigr)(\bq_{k+1},a,\lambda)$
uniquely and smoothly, locally in $(\bq_{k+1},a,\lambda)$; globally single-valued
branches require additional injectivity and domain hypotheses;
\item[(iii)] the reduced function
\begin{equation}
\widetilde S_{k+1}(\bq_{k+1},a,\lambda)
:=
\mathcal S_{k+1}
\bigl(\bq_{k+1};\hat\bq_k,\hat\bp_{k+1},a,\lambda\bigr)
\label{eq:reduced-family}
\end{equation}
satisfies
\[
\frac{\partial\widetilde S_{k+1}}{\partial\bq_{k+1}}=\bp_{k+1},
\qquad
\frac{\partial\widetilde S_{k+1}}{\partial a}=\frac{\partial S_k}{\partial a},
\qquad
\frac{\partial\widetilde S_{k+1}}{\partial\lambda}=-\frac{\partial F_k}{\partial\lambda},
\]
and locally describes the same subset $E_k\circ\Lambda_k$ of $T^*\mathcal Q$ as
$\mathcal S_{k+1}$ does, using only the auxiliary variables $(a,\lambda)$. If in
addition the criterion of Proposition~\ref{prop:composition}(ii) holds, then by (iv)
both are Morse families and both generate the immersed Lagrangian submanifold
$\Lambda_{k+1}=E_k\circ\Lambda_k$; note that the hypotheses of the present proposition
do not by themselves imply that criterion, and if it fails neither function generates a
Lagrangian submanifold;
\item[(iv)] $\widetilde S_{k+1}$ is a Morse family if and only if $\mathcal S_{k+1}$ is.
\end{enumerate}
\end{proposition}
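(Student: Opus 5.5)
The plan is to treat $B$ as a perturbation of its $h=0$ value and to derive (ii)--(iv) from the implicit function theorem and a Schur-complement argument. By \eqref{eq:B-block}, $B=B_0+hE$, where
\[
B_0=\begin{pmatrix} S_{qq} & -I_n\\ -I_n & 0\end{pmatrix},
\qquad
E=-\begin{pmatrix} G_{qq} & G_{qp}\\ G_{pq} & G_{pp}\end{pmatrix}=-D^2_{(\bq_k,\bp_{k+1})}G_k ,
\]
so that $\|E\|\le C_G$.

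For (i), I will first compute $B_0^{-1}$ explicitly:
\[
B_0^{-1}=\begin{pmatrix} 0 & -I_n\\ -I_n & -S_{qq}\end{pmatrix}.
\]
Splitting this into a permutation-type part of norm $1$ and a block containing $S_{qq}$ gives $\|B_0^{-1}\|\le 1+C_S$. Then I write $B=B_0\bigl(I+hB_0^{-1}E\bigr)$ with $\|hB_0^{-1}E\|\le hC_G(1+C_S)<1$ for $h<h_0$. A Neumann series shows that $B$ is invertible and yields the stated bound on $\|B^{-1}\|$. For the determinant, the block formula $\det\begin{pmatrix}A&-I\\-I&0\end{pmatrix}=(-1)^n$ gives $\det B_0=(-1)^n$, and smooth dependence on $h$ with bounded entries gives $\det B=(-1)^n+\mathcal O(h)$.

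For (ii), I will apply the implicit function theorem to the map
\[
(\bq_k,\bp_{k+1};\bq_{k+1},a,\lambda)\mapsto\bigl(\partial\mathcal S_{k+1}/\partial\bq_k,\ \partial\mathcal S_{k+1}/\partial\bp_{k+1}\bigr).
\]
Its Jacobian with respect to $(\bq_k,\bp_{k+1})$ is exactly $B$, which is invertible by (i). This gives local smooth solutions $(\hat\bq_k,\hat\bp_{k+1})$. For (iii), the chain rule together with the vanishing of $\partial_{\bq_k}\mathcal S_{k+1}$ and $\partial_{\bp_{k+1}}\mathcal S_{k+1}$ along $(\hat\bq_k,\hat\bp_{k+1})$ (an envelope-type argument) shows that each partial derivative of $\widetilde S_{k+1}$ equals the corresponding partial derivative of $\mathcal S_{k+1}$; the three listed formulas then follow from \eqref{eq:composition-p}--\eqref{eq:composition-lambda}. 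Consequently, the critical points of $\widetilde S_{k+1}$ in $(a,\lambda)$ are in local bijection with the critical points of $\mathcal S_{k+1}$ in $u$ via $(\bq_{k+1},a,\lambda)\mapsto(\bq_{k+1};\hat\bq_k,\hat\bp_{k+1},a,\lambda)$, and $\partial_{\bq_{k+1}}$ takes the same value at corresponding points. Hence both functions describe the same subset, which by Proposition~\ref{prop:composition}(i) is $E_k\circ\Lambda_k$.

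The main obstacle is (iv), where I will use a Schur complement. Order the rows of the matrix $M$ from the proof of Proposition~\ref{prop:composition} as $(\bq_k,\bp_{k+1})$ followed by $(a,\lambda)$, and its columns as $(\bq_k,\bp_{k+1})$ followed by $(a,\lambda,\bq_{k+1})$. Then
\[
M=\begin{pmatrix} B & C\\ C' & D\end{pmatrix},
\]
where $C'$ collects the mixed second derivatives between $(a,\lambda)$ and $(\bq_k,\bp_{k+1})$, and the last block of columns, indexed by $(a,\lambda,\bq_{k+1})$, includes the base variable $\bq_{k+1}$. Since $B$ is invertible, row operations give $\operatorname{rank}M=2n+\operatorname{rank}(D-C'B^{-1}C)$. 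Differentiating the identities of (iii) once more, using $\partial(\hat\bq_k,\hat\bp_{k+1})/\partial(\bq_{k+1},a,\lambda)=-B^{-1}C$ from the implicit function theorem, identifies $D-C'B^{-1}C$ with
\[
\Bigl(\frac{\partial^2\widetilde S_{k+1}}{\partial(a,\lambda)\,\partial(a,\lambda)}\quad \frac{\partial^2\widetilde S_{k+1}}{\partial(a,\lambda)\,\partial\bq_{k+1}}\Bigr).
\]
This is precisely the Morse matrix \eqref{MorseCon} of $\widetilde S_{k+1}$. Therefore $M$ has full row rank $2n+m_a+m_\lambda$ if and only if this matrix has full row rank $m_a+m_\lambda$, which proves (iv). The step requiring the most care is this identification: keeping the $\bq_{k+1}$ columns in the Schur complement, and checking that the symmetric second-derivative formula for the reduced function matches $D-C'B^{-1}C$ column by column. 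Finally, the last clause of (iii) follows by combining (iv) with Proposition~\ref{prop:composition}(ii).
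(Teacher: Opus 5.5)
Your proposal is correct and follows essentially the same route as the paper. You use a Neumann-series perturbation of $B_0$, with the same explicit $B_0^{-1}$ and the same bound $\|B_0^{-1}\|\le 1+C_S$, for (i); the implicit function theorem with Jacobian $B$ for (ii); the envelope argument for (iii); and block elimination of $M$ with respect to the invertible block $B$ for (iv), identifying the Schur complement, with the $\bq_{k+1}$ columns kept, with the Morse matrix of $\widetilde S_{k+1}$.
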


\begin{proof}
(i) At $h=0$ the block \eqref{eq:B-block} becomes
\[
B_0
=
\begin{pmatrix}
S_{qq} & -I_n\\
-I_n & 0
\end{pmatrix},
\qquad
B_0^{-1}
=
\begin{pmatrix}
0 & -I_n\\
-I_n & -S_{qq}
\end{pmatrix},
\]
as one checks by multiplying out; in particular $\det B_0=(-1)^n$ \emph{independently
of $S_{qq}$}, and $\|B_0^{-1}\|\le1+C_S$. Since
$B-B_0=-h\,D^2_{(\bq_k,\bp_{k+1})}G_k$ has norm at most $hC_G$, the Neumann series for
$B^{-1}$ converges as soon as $hC_G(1+C_S)<1$, which is $h<h_0$, and gives the
stated bound on $\|B^{-1}\|$. Continuity of the determinant gives
$\det B=(-1)^n+\mathcal O(h)$.

(ii) Immediate from (i) and the implicit function theorem, $B$ being the Jacobian of
the two equations with respect to $(\bq_k,\bp_{k+1})$.

(iii) Abbreviate $w=(\bq_k,\bp_{k+1})$ and $v=(a,\lambda)$, so that
$\widetilde S_{k+1}(\bq_{k+1},v)=\mathcal S_{k+1}(\bq_{k+1};\hat w(\bq_{k+1},v),v)$
with $\partial_w\mathcal S_{k+1}(\hat w)=0$. Differentiating and using this
criticality, the terms in $\partial\hat w$ drop out, so
$\partial_{\bq_{k+1}}\widetilde S_{k+1}=\partial_{\bq_{k+1}}\mathcal S_{k+1}=\bp_{k+1}$
and $\partial_v\widetilde S_{k+1}=\partial_v\mathcal S_{k+1}$, which is the second and
third identity by \eqref{eq:composition-a} and \eqref{eq:composition-lambda}. The
critical set of $\widetilde S_{k+1}$ is therefore the image of that of
$\mathcal S_{k+1}$ under the projection forgetting $w$, and the two families assign the
same momentum to the same base point, so they generate the same subset of
$T^*\mathcal Q$, namely $E_k\circ\Lambda_k$ by Proposition~\ref{prop:composition}(i).

(iv) Differentiating $\partial_w\mathcal S_{k+1}(\hat w)=0$ gives
$\partial_v\hat w=-B^{-1}\partial^2_{wv}\mathcal S_{k+1}$ and
$\partial_{\bq_{k+1}}\hat w=-B^{-1}\partial^2_{w\bq_{k+1}}\mathcal S_{k+1}$, whence
\[
\left(
\frac{\partial^2\widetilde S_{k+1}}{\partial v\,\partial v}
\quad
\frac{\partial^2\widetilde S_{k+1}}{\partial v\,\partial\bq_{k+1}}
\right)
=
\left(
\frac{\partial^2\mathcal S_{k+1}}{\partial v\,\partial v}
\quad
\frac{\partial^2\mathcal S_{k+1}}{\partial v\,\partial\bq_{k+1}}
\right)
-
\frac{\partial^2\mathcal S_{k+1}}{\partial v\,\partial w}
\,B^{-1}
\left(
\frac{\partial^2\mathcal S_{k+1}}{\partial w\,\partial v}
\quad
\frac{\partial^2\mathcal S_{k+1}}{\partial w\,\partial\bq_{k+1}}
\right).
\]
The right-hand side is exactly what block Gaussian elimination of the $w$-rows
produces from the last $m_a+m_\lambda$ rows of the matrix $M$ of
Proposition~\ref{prop:composition}. Since $B$ is invertible, right multiplication of
the resulting block-triangular matrix by
$\left(\begin{smallmatrix}I&-B^{-1}X\\0&I\end{smallmatrix}\right)$, where $X$ collects
its upper right blocks, clears them without changing the rank, so
\[
\operatorname{rank}M
=
2n
+
\operatorname{rank}
\left(
\frac{\partial^2\widetilde S_{k+1}}{\partial v\,\partial v}
\quad
\frac{\partial^2\widetilde S_{k+1}}{\partial v\,\partial\bq_{k+1}}
\right),
\]
and $M$ has full row rank $2n+m_a+m_\lambda$ if and only if the reduced matrix has
full row rank $m_a+m_\lambda$.
\end{proof}

\begin{corollary}
\label{cor:semidiscrete-HJ}
Under the hypotheses of Proposition~\ref{prop:reduction} with $m_\lambda=0$ and
$G_k=H(z_k,\cdot,\cdot)$,
\begin{equation}
\widetilde S_{k+1}(\bq,a)
=
S_k(\bq,a)
-
h\,H\!\left(z_k,\bq,\frac{\partial S_k}{\partial\bq}(\bq,a)\right)
+
\mathcal O(h^2).
\label{eq:semidiscrete-HJ}
\end{equation}
\end{corollary}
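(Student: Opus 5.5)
We compute the reduced function by solving the two stationarity equations of Proposition~\ref{prop:reduction}(ii) perturbatively in $h$ and substituting back. With $m_\lambda=0$ and \eqref{eq:consistent-family}, the composition family becomes
\[
\mathcal S_{k+1}(\bq_{k+1};\bq_k,\bp_{k+1},a)
=
S_k(\bq_k,a)+\langle\bp_{k+1},\bq_{k+1}-\bq_k\rangle-h\,H(z_k,\bq_k,\bp_{k+1}).
\]
The equations \eqref{eq:composition-q}--\eqref{eq:composition-p} then read
\[
\bp_{k+1}=\frac{\partial S_k}{\partial\bq_k}(\bq_k,a)-h\frac{\partial H}{\partial\bq_k},
\qquad
\bq_k=\bq_{k+1}-h\frac{\partial H}{\partial\bp_{k+1}}.
\]
At $h=0$ the solution is explicit: $\hat\bq_k=\bq_{k+1}$ and $\hat\bp_{k+1}=\partial S_k/\partial\bq(\bq_{k+1},a)$. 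By Proposition~\ref{prop:reduction}(i)--(ii), $B$ is uniformly invertible for $h<h_0$. The implicit function theorem applied in the variable $h$ therefore gives a smooth branch with $\hat w=\hat w_0+\mathcal O(h)$, uniformly on the set where the bounds $C_G$ and $C_S$ hold.

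The key step is to avoid computing the $\mathcal O(h)$ correction to $\hat w$. We use the envelope (stationarity) property. Define
\[
\Phi(h,w):=\mathcal S_{k+1}(\bq_{k+1};w,a)
\]
for fixed $(\bq_{k+1},a)$. Since $\partial_w\Phi(h,\hat w(h))=0$, we get
\[
\frac{d}{dh}\Phi(h,\hat w(h))=\partial_h\Phi=-H(z_k,\hat\bq_k,\hat\bp_{k+1}).
\]
Integrating from $0$ and using $\hat w=\hat w_0+\mathcal O(h)$ together with smoothness of $H$ gives
\[
\widetilde S_{k+1}=\Phi(0,\hat w_0)-hH(z_k,\hat w_0)+\mathcal O(h^2).
\]
At $h=0$ we have $\Phi(0,\hat w_0)=S_k(\bq_{k+1},a)+\langle\hat\bp_{k+1},0\rangle=S_k(\bq_{k+1},a)$. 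We also have $H(z_k,\hat w_0)=H\bigl(z_k,\bq_{k+1},\partial S_k/\partial\bq(\bq_{k+1},a)\bigr)$. Renaming $\bq_{k+1}$ as $\bq$ yields \eqref{eq:semidiscrete-HJ}.

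An equivalent and more elementary route is to write $\hat w=\hat w_0+h w_1+\mathcal O(h^2)$ and Taylor-expand $\mathcal S_{k+1}$ about $\hat w_0$. The first-order terms in $w_1$ vanish because $\hat w_0$ is stationary for the $h=0$ function, so $w_1$ never needs to be found. This is the same envelope argument.

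The main obstacle is making the $\mathcal O(h^2)$ remainder uniform. This needs a bound on $d\hat w/dh=-B^{-1}\partial_h\partial_w\Phi$. The bound on $\|B^{-1}\|$ comes from Proposition~\ref{prop:reduction}(i). The bound on $\partial_h\partial_w\Phi=-\nabla H$ requires the first derivatives of $H$ to be bounded along the branch, and for a second-order remainder it requires bounded second derivatives of $H$. We would state this as an implicit standing assumption on the region under consideration, in addition to $C_G$ and $C_S$.
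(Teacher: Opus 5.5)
Your proof is correct, and its core is the same first-order perturbation argument as the paper's, packaged differently. The paper follows what you call the elementary route. It reads off $\bq_{k+1}-\hat\bq_k=h\,\partial H/\partial\bp(z_k,\bq,\bp^0)+\mathcal O(h^2)$ from \eqref{eq:composition-p} and Taylor-expands $S_k(\hat\bq_k,a)$ about $\bq=\bq_{k+1}$. It then observes that $\langle\hat\bp_{k+1},\bq_{k+1}-\hat\bq_k\rangle$ and $-\langle\bp^0,\bq_{k+1}-\hat\bq_k\rangle$ cancel to $\mathcal O(h^2)$. That cancellation is exactly the stationarity of $\hat w_0$ for the $h=0$ function that you exploit.

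Your main route instead differentiates the critical value in $h$, which yields the exact identity $\widetilde S_{k+1}(\bq,a)=S_k(\bq,a)-\int_0^h H(z_k,\hat w(s))\,ds$. This buys two things. First, no expansion of $S_k$ is needed. Second, the remainder becomes explicit: writing $H(\hat w(s))-H(\hat w_0)=\int_0^s\nabla H\cdot\hat w'\,dt$ with $\hat w'=B^{-1}\nabla_w H$ gives $|R|\le\tfrac12h^2\,\sup\|B^{-1}\|\,\sup\|\nabla H\|^2$, where $\|B^{-1}\|$ is controlled by Proposition~\ref{prop:reduction}(i).

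Consequently, the extra standing assumption you propose is smaller than you state. The second derivatives of $H$ in $(\bq,\bp)$ are exactly what $C_G$ already bounds. The only additional ingredient is a bound on $\nabla H$ along the branch, and the paper's informal $\mathcal O(h^2)$ tacitly uses that as well.

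The paper's version, for its part, keeps the first-order ray displacement $h\,\partial H/\partial\bp$ visible. This ties the formula directly to the Type--II integrator.
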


\begin{proof}
Write $\bp^0=\partial S_k(\bq,a)/\partial\bq$. At $h=0$ the criticality equations give
$\hat\bq_k=\bq$ and $\hat\bp_{k+1}=\bp^0$; to first order,
$\bq_{k+1}-\hat\bq_k=h\,\partial H/\partial\bp(z_k,\bq,\bp^0)+\mathcal O(h^2)$ from
\eqref{eq:composition-p}. Substituting into
$\mathcal S_{k+1}=S_k(\hat\bq_k,a)+\langle\hat\bp_{k+1},\bq_{k+1}-\hat\bq_k\rangle
-hH(z_k,\hat\bq_k,\hat\bp_{k+1})$ and expanding
$S_k(\hat\bq_k,a)=S_k(\bq,a)-\langle\bp^0,\bq_{k+1}-\hat\bq_k\rangle+\mathcal O(h^2)$,
the two inner products cancel and \eqref{eq:semidiscrete-HJ} follows.
\end{proof}

Equation \eqref{eq:semidiscrete-HJ} is an explicit Euler discretisation, in the
propagation variable, of the Hamilton--Jacobi equation
\[
\frac{\partial S}{\partial z}
+
H\!\left(z,\bq,\frac{\partial S}{\partial\bq}\right)
=
0
\]
for the \emph{family} $S(\bq,a)$, with the auxiliary variable $a$ carried along as a
passive parameter. The content of Propositions~\ref{prop:composition} and
\ref{prop:reduction} is that this semi-discretisation remains valid where the
classical Hamilton--Jacobi description does not, namely where the projection of
$\Lambda_k$ onto $\mathcal Q$ is singular.

\begin{remark}
\label{rem:through-caustics}
It is worth being precise about why the caustic does not obstruct
Proposition~\ref{prop:reduction}. The hypothesis is a bound on
$S_{qq}=\partial^2S_k/\partial\bq_k\partial\bq_k$, the second derivative of the family
at \emph{fixed} $a$, and not on the curvature of the projected wavefront. These are
different quantities, and only the second one is singular at a caustic: for the fold
family \eqref{eq:optical-Sk-app},
\[
\frac{\partial^2S_k}{\partial x_k^2}
=
S_{\mathrm{reg},k}''(x_k)-\rho_k''(x_k)(a-a_c)
\]
remains bounded on compact sets, whereas the branchwise curvature
$dp_k^{\pm}/dx_k$ obtained from \eqref{eq:optical-two-momenta-app} diverges like
$\rho_k(x_k)^{-1/2}$ as the caustic is approached. Since $\det B_0=(-1)^n$
independently of $S_{qq}$, the wavefront curvature enters the threshold
$h_0=[C_G(1+C_S)]^{-1}$ only through the product $hC_GC_S$, and a fold of the
projection, at which $C_S$ stays bounded, does not by itself obstruct the
reduction. It should be said that $C_S$ is not the only way the reduction can fail:
a stiff Hamiltonian alone will do it, as the scalar example
$S_k(q,a)=qa$, $G_k=\tfrac12 c\,(q^2+p^2)$ shows, for which $C_S=0$ but
$\det B=c^2h^2-1$ vanishes at $h=1/c=h_0$. What is true is that within the family the
reduction survives the caustic, and fails only when $hC_G(1+C_S)$ ceases to be
small --- in particular when the \emph{family} develops a singularity in the base
direction, which makes $C_S$ blow up. At such a point one additional auxiliary variable must be introduced; in
one configuration dimension a single auxiliary variable already generates every
singularity of type $A_k$, through
\[
S(x,\sigma)
=
\frac{\sigma^{k+1}}{k+1}
+
\sum_{j=1}^{k-1}\rho_j(x)\,\sigma^{j},
\]
so the auxiliary dimension never needs to exceed one there.
\end{remark}
\section{Application: discrete propagation of an optical caustic}
\label{sec:optical-application}

We now apply the discrete geometric framework developed above to the
propagation of an optical wavefront through a fold caustic. The essential
feature of this example is that the discrete Hamiltonian dynamics may
remain regular while the projection of the propagated Lagrangian
submanifold onto configuration space becomes singular. Consequently, the
Morse-family description is required for the propagated wavefront rather
than for the discrete Hamiltonian map itself.

This distinction allows us to propagate the complete Lagrangian
submanifold without introducing a separate generating function for every
branch of the multivalued Hamilton--Jacobi solution.

\subsection{Optical Hamiltonian and discrete dynamics}

Consider a two-dimensional optical medium with refractive index
\(n=n(z,x)\), where \(z\) denotes the propagation variable and \(x\) is the
transverse configuration variable. A standard Hamiltonian formulation of
geometrical optics is determined by
\begin{equation}
    H(z,x,p)
    =
    -\sqrt{n^2(z,x)-p^2},
    \label{eq:optical-Hamiltonian-app}
\end{equation}
with Hamilton equations
\begin{equation}
    \frac{dx}{dz}
    =
    \frac{\partial H}{\partial p},
    \qquad
    \frac{dp}{dz}
    =
    -
    \frac{\partial H}{\partial x}.
    \label{eq:optical-Hamilton-equations-app}
\end{equation}
Explicitly,
\begin{equation}
    \frac{dx}{dz}
    =
    \frac{p}{\sqrt{n^2(z,x)-p^2}},
    \qquad
    \frac{dp}{dz}
    =
    \frac{n(z,x)n_x(z,x)}
    {\sqrt{n^2(z,x)-p^2}}.
    \label{eq:optical-Hamilton-equations-explicit-app}
\end{equation}
The Hamiltonian formulation of ray optics and the relation between
Hamiltonian rays and caustics are classical; see, for example,
\cite{KravtsovOrlov1990,KravtsovOrlov1993}.

Throughout this section $n$ is assumed positive and at least twice continuously
differentiable, and all statements are made on a compact subset of the domain
\begin{equation}\label{eq:optical-domain}
\mathcal D
=
\bigl\{(z,x,p)\;:\;|p|<n(z,x)\bigr\},
\end{equation}
on which \eqref{eq:optical-Hamiltonian-app} is defined and smooth. The restriction is
not cosmetic: $|p|=n$ is grazing incidence, where the right-hand sides of
\eqref{eq:optical-Hamilton-equations-explicit-app} blow up and rays turn.

We discretize the propagation variable according to
\[
    z_k=z_0+kh,
\]
where \(h>0\) is the propagation step. Following the Type--II discrete
Hamiltonian construction, we use the first-order approximation
\begin{equation}
    H_{d+}^{\,k}(x_k,p_{k+1})
    =
    x_kp_{k+1}
    +
    hH(z_k,x_k,p_{k+1})
    \label{eq:optical-Hdplus-app}
\end{equation}
to the corresponding exact right discrete Hamiltonian. For
\eqref{eq:optical-Hamiltonian-app}, this gives
\begin{equation}
    H_{d+}^{\,k}(x_k,p_{k+1})
    =
    x_kp_{k+1}
    -
    h\sqrt{
        n^2(z_k,x_k)-p_{k+1}^2
    }.
    \label{eq:optical-Hdplus-explicit-app}
\end{equation}

The associated Type--II discrete Hamilton equations are
\begin{equation}
    p_k
    =
    \frac{\partial H_{d+}^{\,k}}{\partial x_k},
    \qquad
    x_{k+1}
    =
    \frac{\partial H_{d+}^{\,k}}{\partial p_{k+1}}.
    \label{eq:optical-TypeII-app}
\end{equation}
Differentiating \eqref{eq:optical-Hdplus-explicit-app}, we obtain
\begin{equation}
    p_k
    =
    p_{k+1}
    -
    h
    \frac{
        n(z_k,x_k)n_x(z_k,x_k)
    }{
        \sqrt{n^2(z_k,x_k)-p_{k+1}^2}
    },
    \label{eq:optical-discrete-p-app}
\end{equation}
and
\begin{equation}
    x_{k+1}
    =
    x_k
    +
    h
    \frac{
        p_{k+1}
    }{
        \sqrt{n^2(z_k,x_k)-p_{k+1}^2}
    }.
    \label{eq:optical-discrete-x-app}
\end{equation}

Equivalently,
\begin{equation}
    \frac{x_{k+1}-x_k}{h}
    =
    H_p(z_k,x_k,p_{k+1}),
    \qquad
    \frac{p_{k+1}-p_k}{h}
    =
    -H_x(z_k,x_k,p_{k+1}),
    \label{eq:optical-consistency-app}
\end{equation}
so that the scheme is consistent with the continuous optical Hamilton
equations. Notice that no auxiliary variable is required to define this
discrete Hamiltonian relation.

For given \((x_k,p_k)\), the first equation
\eqref{eq:optical-discrete-p-app} is solved implicitly for \(p_{k+1}\),
after which \eqref{eq:optical-discrete-x-app} determines \(x_{k+1}\).
Solvability is local and requires the twist condition \eqref{eq:twist}, which here reads
\begin{equation}\label{eq:optical-twist}
1
-
h\,\frac{p_{k+1}\,n(z_k,x_k)\,n_x(z_k,x_k)}
        {\bigl(n^2(z_k,x_k)-p_{k+1}^2\bigr)^{3/2}}
\neq0 ,
\end{equation}
and which holds on any compact subset of $\mathcal D$ for $h$ small enough. Under
\eqref{eq:optical-twist} the relation is locally a symplectic one-step map,
\begin{equation}
    (x_k,p_k)
    \longmapsto
    (x_{k+1},p_{k+1}),
    \label{eq:optical-integrator-app}
\end{equation}
which we refer to below as the Type--II ray integrator
\eqref{eq:optical-integrator-app}.

\subsection{The propagated wavefront and formation of a fold}

Let \(a\in I\) parameterize the initial family of rays. Thus
\[
    x_0=x_0(a),
    \qquad
    p_0=p_0(a),
\]
and repeated application of the discrete Hamiltonian map gives
\[
    x_k=x_k(a),
    \qquad
    p_k=p_k(a).
\]
At every propagation step, the family determines the Lagrangian
submanifold
\begin{equation}
    \Lambda_k^d
    =
    \left\{
        \bigl(x_k(a),p_k(a)\bigr)
        \; ;\;
        a\in I
    \right\}
    \subset T^*\Q.
    \label{eq:optical-Lambda-k-app}
\end{equation}

The variables \(z\) and \(a\) play different roles. The variable \(z\)
describes propagation along each ray, whereas \(a\) labels the different
rays in the wavefront. At \(z=0\), the points \(x_0(a)\) form the initial
wavefront. Figure~\ref{fig:optical-rays-fold} shows a computed example in which
\(\Lambda_k^d\) remains a smooth curve while its projection develops an envelope.

\begin{figure}[ht]
\centering
\includegraphics[width=0.92\textwidth]{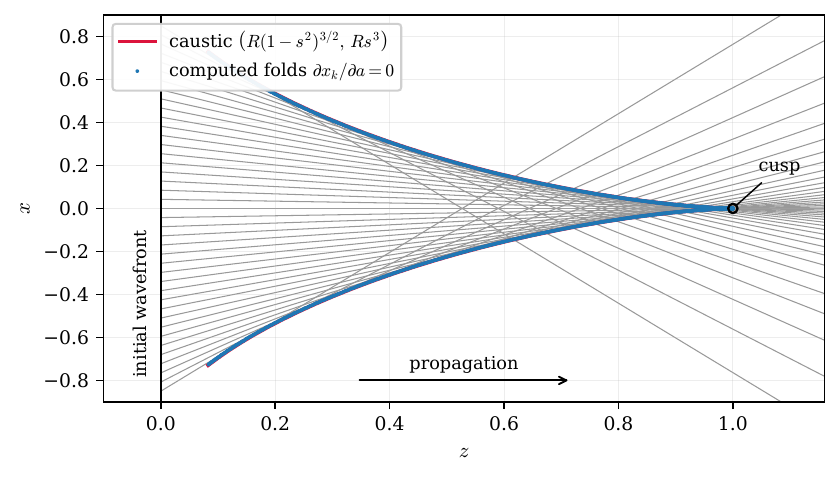}
\caption{Propagation of a one-parameter family of optical rays, computed with the
Type--II scheme of Section~\ref{sec:optical-application} in a homogeneous medium
\(n\equiv1\) from the parabolic initial wavefront \(W_0(x)=-x^2/2R\), \(R=1\).
The variable \(z\) is the propagation variable, whereas \(a\in I\) labels the
rays through their initial positions \(x_0(a)\). A caustic occurs where the
projection of the propagated Lagrangian submanifold onto configuration space
becomes singular; it is the \emph{envelope} of the ray family rather than a point
of common intersection. For this initial wavefront the envelope is available in
closed form as
\(\bigl(z_c(s),x_c(s)\bigr)=\bigl(R(1-s^2)^{3/2},\,Rs^{3}\bigr)\) with
\(s=a/R\) and \(|s|<1\), equivalently as the astroid arc
\(|x_c/R|^{2/3}+(z_c/R)^{2/3}=1\); it consists of two fold branches meeting at a
semicubical cusp at \((R,0)\), near which
\(x_c^2\sim\tfrac{8}{27}(R-z_c)^3/R\). The generic fold condition
\eqref{eq:optical-fold-condition-app} holds along each branch away from the cusp.
The computed points at which \(\partial x_k/\partial a\) changes sign are marked
and lie on this curve.}
\label{fig:optical-rays-fold}
\end{figure}

Consider the projection
\[
    \pi_\Q:\Lambda_k^d\longrightarrow \Q,
    \qquad
    (x_k(a),p_k(a))
    \longmapsto x_k(a).
\]
We assume that the propagated family is an immersed curve, that is, that
\((\partial x_k/\partial a,\partial p_k/\partial a)\neq(0,0)\) for every \(a\); this is
what makes \(\Lambda_k^d\) an immersed one-dimensional submanifold and is not implied by
the fold conditions below. A generic fold of the projection then occurs at \(a=a_c\) when
\begin{equation}
    \frac{\partial x_k}{\partial a}(a_c)=0,
    \qquad
    \frac{\partial^2x_k}{\partial a^2}(a_c)\neq0.
    \label{eq:optical-fold-condition-app}
\end{equation}
At such a point the immersion hypothesis forces \(\partial p_k/\partial a(a_c)\neq0\).
Locally, after a fibre-preserving local reparametrisation of the auxiliary variable ---
which need not agree with the original ray label \(a\) --- and a non-zero rescaling, the
fold can be written in the normal form
\begin{equation}
    (a-a_c)^2=\rho_k(x_k),
    \label{eq:optical-fold-relation-app}
\end{equation}
where
\begin{equation}
    \rho_k(x_c)=0,
    \qquad
    \rho_k'(x_c)\neq0.
    \label{eq:optical-rho-app}
\end{equation}

For \(\rho_k(x_k)>0\), the same configuration point \(x_k\) corresponds to
two values of the ray parameter,
\begin{equation}
    a_\pm(x_k)
    =
    a_c\pm\sqrt{\rho_k(x_k)}.
    \label{eq:optical-two-rays-app}
\end{equation}
Thus the projection is no longer one-to-one, even though the Lagrangian
submanifold in phase space remains regular.

\subsection{One Morse family instead of separate branches}

Near the fold, represent the Lagrangian submanifold by the Morse family
\begin{equation}
    S_k(x_k,a)
    =
    S_{\mathrm{reg},k}(x_k)
    +
    \frac{(a-a_c)^3}{3}
    -
    \rho_k(x_k)(a-a_c).
    \label{eq:optical-Sk-app}
\end{equation}
This is the standard generating-family normal form associated with a fold
singularity.

Its critical set is determined by
\begin{equation}
    \frac{\partial S_k}{\partial a}
    =
    (a-a_c)^2-\rho_k(x_k)
    =
    0.
    \label{eq:optical-Sk-critical-app}
\end{equation}
The Lagrangian submanifold generated by \(S_k\) is therefore
\begin{equation}
    \Lambda_k^d
    =
    \left\{
        \left(
            x_k,
            \frac{\partial S_k}{\partial x_k}(x_k,a)
        \right)
        \; ;\;
        \frac{\partial S_k}{\partial a}(x_k,a)=0
    \right\}.
    \label{eq:optical-Lambda-Morse-app}
\end{equation}

Since
\begin{equation}
    \frac{\partial S_k}{\partial x_k}
    =
    S_{\mathrm{reg},k}'(x_k)
    -
    \rho_k'(x_k)(a-a_c),
    \label{eq:optical-Sx-app}
\end{equation}
the two critical points \eqref{eq:optical-two-rays-app} give
\begin{equation}
    p_k^\pm(x_k)
    =
    S_{\mathrm{reg},k}'(x_k)
    \mp
    \rho_k'(x_k)\sqrt{\rho_k(x_k)}.
    \label{eq:optical-two-momenta-app}
\end{equation}

This makes explicit the distinction between a branchwise
Hamilton--Jacobi description and the Morse-family description. If one
eliminates the ray parameter, the two branches must be represented
separately by local generating functions
\begin{equation}
    W_k^+(x_k),
    \qquad
    W_k^-(x_k),
    \label{eq:optical-separate-W-app}
\end{equation}
satisfying
\begin{equation}
    \frac{dW_k^\pm}{dx_k}
    =
    p_k^\pm(x_k).
    \label{eq:optical-Wpm-app}
\end{equation}

More generally, if the projection produces several sheets, a branchwise
representation requires a collection
\[
    W_k^1,\ldots,W_k^N,
\]
with
\[
    \Lambda_k^d
    =
    \bigcup_{j=1}^{N}
    \operatorname{Im}(dW_k^j)
\]
on the corresponding regular regions.

The Morse-family representation does not introduce these functions
independently. Instead, one keeps the single function
\[
    S_k(x_k,a)
\]
and its critical set
\[
    \Sigma_k
    =
    \left\{
        (x_k,a)
        \; ;\;
        \frac{\partial S_k}{\partial a}=0
    \right\}.
\]
Each critical point \(a_j(x_k)\) determines one sheet through
\begin{equation}
    p_k^j(x_k)
    =
    \frac{\partial S_k}{\partial x_k}
    \bigl(x_k,a_j(x_k)\bigr).
    \label{eq:optical-critical-sheet-app}
\end{equation}
Thus the different branches appear as different critical points of the
same generating family rather than as independently defined generating
functions.

For the fold considered here,
\[
    \Sigma_k
    =
    \left\{
        (x_k,a)
        \; ;\;
        (a-a_c)^2=\rho_k(x_k)
    \right\},
\]
and its two regular components give
\[
    a_+(x_k),
    \qquad
    a_-(x_k).
\]
At the caustic,
\[
    \rho_k(x_c)=0,
\]
these critical points coalesce:
\[
    a_+(x_c)=a_-(x_c)=a_c.
\]
The separate functions \(W_k^\pm\) cease to provide a single smooth
description of the complete Lagrangian submanifold, whereas
\(S_k(x_k,a)\) remains smooth.

\subsection{Propagation of the complete Morse family}

We now propagate this Lagrangian submanifold without first decomposing it
into its branches.

Given \(S_k(x_k,a)\), define
\begin{equation}
\begin{aligned}
    \mathcal S_{k+1}
    (x_{k+1};x_k,p_{k+1},a)
    ={}&
    S_k(x_k,a)
    +
    p_{k+1}x_{k+1}
    -
    H_{d+}^{\,k}(x_k,p_{k+1}).
    \label{eq:optical-composition-family-app}
\end{aligned}
\end{equation}
The variables \(x_k,p_{k+1}\), and \(a\) are auxiliary variables of this
composition, while \(x_{k+1}\) is the base variable of the propagated
Lagrangian submanifold.

The criticality equations are
\begin{equation}
    \frac{\partial\mathcal S_{k+1}}{\partial x_k}=0,
    \qquad
    \frac{\partial\mathcal S_{k+1}}{\partial p_{k+1}}=0,
    \qquad
    \frac{\partial\mathcal S_{k+1}}{\partial a}=0.
    \label{eq:optical-composition-critical-app}
\end{equation}

The first equation gives
\begin{equation}
    \frac{\partial S_k}{\partial x_k}
    =
    \frac{\partial H_{d+}^{\,k}}{\partial x_k}.
    \label{eq:optical-composition-xk-app}
\end{equation}
Since
\[
    p_k=\frac{\partial S_k}{\partial x_k},
\]
we recover
\[
    p_k
    =
    \frac{\partial H_{d+}^{\,k}}{\partial x_k}.
\]

The second equation gives
\begin{equation}
    x_{k+1}
    =
    \frac{\partial H_{d+}^{\,k}}{\partial p_{k+1}},
    \label{eq:optical-composition-p-app}
\end{equation}
which is the second Type--II discrete Hamilton equation.

Finally,
\begin{equation}
    \frac{\partial\mathcal S_{k+1}}{\partial a}
    =
    \frac{\partial S_k}{\partial a}
    =
    0,
    \label{eq:optical-composition-a-app}
\end{equation}
so the complete critical set of the incoming Morse family is retained
during the propagation.

The momentum of the propagated Lagrangian submanifold is
\begin{equation}
    p_{k+1}
    =
    \frac{\partial\mathcal S_{k+1}}
         {\partial x_{k+1}}.
    \label{eq:optical-propagated-momentum-app}
\end{equation}

Consequently,
\begin{equation}
    \Lambda_{k+1}^d
    =
    \left\{
        \left(
            x_{k+1},
            \frac{\partial\mathcal S_{k+1}}
                 {\partial x_{k+1}}
        \right)
        \; ;\;
        D_{(x_k,p_{k+1},a)}
        \mathcal S_{k+1}=0
    \right\}.
    \label{eq:optical-propagated-Lagrangian-app}
\end{equation}

For the optical Hamiltonian, the composition family becomes
\begin{equation}
\begin{aligned}
    \mathcal S_{k+1}
    ={}&
    S_k(x_k,a)
    +
    p_{k+1}(x_{k+1}-x_k)
    \\
    &+
    h\sqrt{
        n^2(z_k,x_k)-p_{k+1}^2
    }.
    \label{eq:optical-composition-explicit-app}
\end{aligned}
\end{equation}
Its criticality equations are
\begin{equation}
    (a-a_c)^2-\rho_k(x_k)=0,
    \label{eq:optical-algorithm-a-app}
\end{equation}
\begin{equation}
    S_{\mathrm{reg},k}'(x_k)
    -
    \rho_k'(x_k)(a-a_c)
    -
    p_{k+1}
    +
    h
    \frac{
        n(z_k,x_k)n_x(z_k,x_k)
    }{
        \sqrt{n^2(z_k,x_k)-p_{k+1}^2}
    }
    =
    0,
    \label{eq:optical-algorithm-p-app}
\end{equation}
and
\begin{equation}
    x_{k+1}
    -
    x_k
    -
    h
    \frac{
        p_{k+1}
    }{
        \sqrt{n^2(z_k,x_k)-p_{k+1}^2}
    }
    =
    0.
    \label{eq:optical-algorithm-x-app}
\end{equation}

Using
\[
    p_k
    =
    S_{\mathrm{reg},k}'(x_k)
    -
    \rho_k'(x_k)(a-a_c),
\]
the last two equations are exactly
\eqref{eq:optical-discrete-p-app} and
\eqref{eq:optical-discrete-x-app}. Hence the Morse-family propagation
introduces no modification of the underlying Type--II integrator; it
provides a representation of the complete Lagrangian submanifold on which
that integrator acts.

In particular, the two solutions
\[
    a=a_\pm(x_k)
\]
of \eqref{eq:optical-algorithm-a-app} are propagated by the same critical
system \eqref{eq:optical-algorithm-a-app}--
\eqref{eq:optical-algorithm-x-app}. There is no need to construct two
different propagation laws for \(W_k^+\) and \(W_k^-\).

This is the central computational distinction:
\[
    \{W_k^+,W_k^-,\ldots\}
    \qquad\hbox{is replaced geometrically by}\qquad
    \left(
        S_k(x_k,a),
        \frac{\partial S_k}{\partial a}=0
    \right).
\]
The different sheets are recovered as the different critical points of the
single Morse family.

\subsection{Verification of the regularity and reduction hypotheses}
\label{sec:optical-hypotheses}

Here $\dim\Q=1$ --- we avoid writing $n$ for this, since $n$ denotes the refractive index
throughout the present section --- the incoming family $S_k$ of
\eqref{eq:optical-Sk-app} carries the single auxiliary variable $a$, so $m_a=1$, and the
Type--II discrete Hamiltonian \eqref{eq:optical-Hdplus-explicit-app} carries none, so
$m_\lambda=0$. We check in
turn the hypotheses of Propositions~\ref{prop:composition} and
\ref{prop:reduction}, all of them on a compact subset of the domain $\mathcal D$ of
\eqref{eq:optical-domain}.

First, $S_k$ is a Morse family. On its critical set $(a-a_c)^2=\rho_k(x_k)$,
\[
\left(
\frac{\partial^2S_k}{\partial a\,\partial x_k}
\quad
\frac{\partial^2S_k}{\partial a\,\partial a}
\right)
=
\bigl(-\rho_k'(x_k)\quad 2(a-a_c)\bigr),
\]
which has rank one unless $\rho_k'(x_k)=0$ and $a=a_c$ simultaneously. The second
equality forces $\rho_k(x_k)=0$, hence $x_k=x_c$, and the first is then excluded by
the fold hypothesis \eqref{eq:optical-rho-app}. Thus $S_k$ is a Morse family
everywhere, including at the caustic itself.

Second, $F_{qp}$ is invertible for small $h$. From
\eqref{eq:optical-Hdplus-explicit-app},
\[
F_{qp}
=
\frac{\partial^2H_{d+}^{\,k}}{\partial x_k\,\partial p_{k+1}}
=
1
+
h\,\frac{\partial^2H}{\partial x\,\partial p}
=
1
-
h\,
\frac{p_{k+1}\,n(z_k,x_k)n_x(z_k,x_k)}
     {\bigl(n^2(z_k,x_k)-p_{k+1}^2\bigr)^{3/2}} ,
\]
which is non-zero as soon as $h$ is smaller than the reciprocal of the supremum of
$|p\,nn_x|(n^2-p^2)^{-3/2}$ over the compact set in question. Since $m_\lambda=0$ the
second condition in \eqref{eq:composition-hypotheses} is vacuous, so
Proposition~\ref{prop:composition} applies and the composition family
\eqref{eq:optical-composition-family-app} is a Morse family generating the propagated
Lagrangian submanifold \eqref{eq:optical-propagated-Lagrangian-app}.

Third, the reduction applies. The discrete Hamiltonian is of the consistent form
\eqref{eq:consistent-family} with $G_k=H(z_k,\cdot,\cdot)$, whose second derivatives
are bounded on compact subsets of $\mathcal D$, and
\[
\frac{\partial^2S_k}{\partial x_k^2}
=
S_{\mathrm{reg},k}''(x_k)-\rho_k''(x_k)(a-a_c)
\]
is bounded once $a$ is also confined to a compact neighbourhood of the critical set ---
a compact set in $(z,x,p)$ alone does not bound it. On such a set both $C_G$ and
$C_S$ of Proposition~\ref{prop:reduction} are finite and the reduction is available
for $h<h_0$. The estimate is for a single step; to iterate it we assume that the
propagated family remains in a compact chart on which the same domain, smoothness,
twist and Hessian bounds hold, and under that assumption the auxiliary variables $x_k$
and $p_{k+1}$ introduced by the composition can be eliminated at each step, so that the
propagation is a one-step map on generating families of a single auxiliary variable,
\[
S_k(x_k,a)
\longmapsto
\widetilde S_{k+1}(x_{k+1},a),
\]
rather than a construction whose auxiliary dimension grows with $k$. Explicitly, by
Corollary~\ref{cor:semidiscrete-HJ},
\begin{equation}
\widetilde S_{k+1}(x,a)
=
S_k(x,a)
+
h\sqrt{
    n^2(z_k,x)
    -
    \left(\frac{\partial S_k}{\partial x}(x,a)\right)^{\!2}
}
+
\mathcal O(h^2),
\label{eq:optical-reduced-update}
\end{equation}
the plus sign arising because the update subtracts $hH$ and the optical Hamiltonian
\eqref{eq:optical-Hamiltonian-app} is itself negative. Equation \eqref{eq:optical-reduced-update} is a
first-order discretisation of the eikonal equation for the family $S(x,a)$, and by
Remark~\ref{rem:through-caustics} it remains valid through the fold, where the
corresponding equation for a single-valued $W$ does not.

\subsection{Recovery of the discrete Hamilton--Jacobi equations on regular branches}

Away from the caustic, the projection of the Lagrangian submanifold onto
configuration space is locally regular. Hence, each local branch may be
represented by an exact one-form,
\[
p_k=\frac{dW_k^j}{dx_k},
\qquad
p_{k+1}=\frac{dW_{k+1}^j}{dx_{k+1}},
\]
where the index \(j\) labels a local graphical branch.

For the Type--II discrete Hamiltonian
\[
H_{d+}^{\,k}(x_k,p_{k+1}),
\]
which carries no auxiliary variable, the critical set $\mathcal C_k$ of
\eqref{eq:HJ-theorem-1} is parametrised by $(x_k,p_{k+1})$ alone and has a single branch
over each regular point, so the branchwise form of the discrete Hamilton--Jacobi
equations of Section~\ref{sec:discrete-HJ} is here equivalent to
\eqref{eq:exact-HJ-1}, and reads
\begin{equation}
\frac{dW_k^j}{dx_k}
=
\frac{\partial H_{d+}^{\,k}}{\partial x_k}
\left(
x_k,
\frac{dW_{k+1}^j}{dx_{k+1}}
\right),
\label{eq:optical-HJ-1}
\end{equation}
together with
\begin{equation}
x_{k+1}
=
\frac{\partial H_{d+}^{\,k}}{\partial p_{k+1}}
\left(
x_k,
\frac{dW_{k+1}^j}{dx_{k+1}}
\right).
\label{eq:optical-HJ-2}
\end{equation}

Thus, on every regular branch, the Morse-family propagation reduces locally
to the ordinary Type--II discrete Hamilton--Jacobi description relating the
generating functions at two consecutive steps,
\[
W_k^j
\qquad\text{and}\qquad
W_{k+1}^j.
\]

At the caustic, however, the projection ceases to be locally invertible.
Consequently, the complete propagated Lagrangian submanifold can no longer
be represented by a single generating function \(W_k\). Although one could
introduce separate local functions
\[
W_k^1,\ldots,W_k^N
\]
on the individual graphical branches, such a description requires an
explicit branch decomposition.

The Morse-family formulation avoids this decomposition. Instead, the entire
Lagrangian submanifold is represented by the single generating family
\[
S_k(x_k,a),
\]
whose different branches arise automatically as different critical points
of
\[
\frac{\partial S_k}{\partial a}=0.
\]
The propagation is then performed through the composition family
\begin{equation}
\mathcal S_{k+1}
=
S_k(x_k,a)
+
p_{k+1}x_{k+1}
-
H_{d+}^{\,k}(x_k,p_{k+1}),
\label{eq:optical-composition-final}
\end{equation}
with criticality conditions
\begin{equation}
\frac{\partial \mathcal S_{k+1}}{\partial x_k}=0,
\qquad
\frac{\partial \mathcal S_{k+1}}{\partial p_{k+1}}=0,
\qquad
\frac{\partial \mathcal S_{k+1}}{\partial a}=0.
\label{eq:optical-composition-critical-final}
\end{equation}

Therefore, the proposed method does not propagate the branches
\(W_k^1,W_k^2,\ldots\) separately. Rather, it propagates a single Morse
family whose critical set contains all the branches simultaneously.
The Type--II discrete Hamiltonian provides the symplectic propagation in
phase space, whereas the Morse family provides a regular representation of
the multivalued Hamilton--Jacobi solution when its projection onto
configuration space develops a caustic.

Accordingly, the construction may be interpreted as a Morse-family
integrator for discrete Hamilton--Jacobi dynamics across caustics: the
discrete dynamics propagates the complete Lagrangian submanifold, while the
different branches of the Hamilton--Jacobi solution are recovered as
different critical points of the same generating family.

\bibliographystyle{plainnat}
\bibliography{references} 

\end{document}